\documentclass[11pt]{article}

\usepackage{placeins}
\usepackage{graphicx}
\usepackage{geometry}
\usepackage{amsmath,amssymb,amsfonts,amsbsy,amsthm}
\usepackage{thmtools}

\newtheorem{assumption}{Assumption}
\usepackage{subfigure}
\usepackage[round]{natbib}
\usepackage{array}
\usepackage{float}
\usepackage{booktabs}
\usepackage{longtable}
\usepackage{url}
\usepackage{color}
\usepackage[normalem]{ulem}

\theoremstyle{definition}
\newtheorem{proposition}{Proposition}
\newtheorem{theorem}{Theorem}[section]
\newtheorem{corollary}[theorem]{Corollary}

\renewcommand\thmcontinues[1]{Continued}
\newtheorem{lemma}[theorem]{Lemma}
\newtheorem{remark}[theorem]{Remark}

\newcommand{\ep}{\varepsilon}

\newcommand{\R}{\mathbb{R}}
\usepackage{parskip}

\theoremstyle{definition}

\title{Negative Oil \& Nickel Squeeze: A Feedback Model for Extreme Commodity Futures Prices}
\author{Iosif Zimbidis\thanks{ORFE Department, Princeton University} \and Ronnie Sircar\thanks{\hangindent=1.75em
  \hangafter=1 ORFE Department, Bendheim Center for Finance and Andlinger Center for Energy \& the Environment,
Princeton University}}
\date{\today}

\begin{document}
\maketitle
\begin{abstract}
On April 20, 2020, the May front-month WTI oil futures contract, one day before its expiration date, opened near $\$17/$barrel and dropped far below zero in a single trading day, reaching an intraday low of $-\$40.32$ and settling at $-\$37.63$. Such market behavior was unforeseen at the time. This event, and the March 2022 nickel squeeze, illustrate how constraints on physical delivery create pressure to close futures positions and distort futures prices. We develop a feedback model that connects the resulting price distortion to a roll option, and to the imbalance between delivery-constrained long and short positions. Under lognormal benchmark dynamics, the roll-option value is shown to satisfy a pricing PDE which is nonlinear because its payoff depends on the observed futures price, which itself includes the feedback correction. Remarkably, in both cases, there is an explicit solution in terms of  classical Black-Scholes-Margrabe exchange option formulas, up to solving a scalar equation. We show this allows that prices may go negative under distortion of a positive-price lognormal model. An analogous construction applies when the lognormal base is switched to Bachelier normal dynamics. Illustrations based on WTI and nickel show the imbalance required to reproduce the extreme event prices under the benchmark assumptions.

\end{abstract}


\section{Introduction}
Commodity futures link financial trading to physical delivery. Storage, convenience yield and the term structure of supply and demand are central to their valuation; see, for example, \citet{Bouchouev2023virtualbarrels}. Participation by financial investors and the interaction between trading and fundamentals add further channels through which prices can move \citep{Tang2012index,Cheng2014}. During periods of market stress, however, the ability to hold a contract through delivery can differ sharply across participants. Traders who must close their positions may then face a limited set of counterparties able and willing to absorb them.

Two episodes motivate this paper. On April~20, 2020, the May WTI futures contract settled at $-\$37.63$/bbl on its penultimate trading day, amid a collapse in oil demand and severe constraints on storage and delivery \citep{CFTC2020WTI}. On March~8, 2022, the LME three-month nickel price reached $\$101{,}365$/ton during a short squeeze, before trading was suspended and that day's trades were canceled \citep{OliverWyman2023}. Both raise a related question: how can pressure to exit positions amplify a futures-price movement when access to physical delivery is limited?

We address this question through a model that separates traders by their license (or not) to take delivery. An imbalance among unlicensed positions creates one-sided demand to close-out: excess longs must sell, and excess shorts must buy. The cost of this constraint is represented by a roll option. Since the option payoff depends on the observed futures price, and the observed price includes the option-value correction, the pricing equation contains feedback, and is thereby nonlinear.

Our model is explanatory. It takes a benchmark price as an input, and builds an additional mechanism through which constrained positions can move the observed price away from that benchmark. Its main advantage is analytical tractability: despite the nonlinear pricing PDE, the specifications studied here admit exact solutions in terms of standard option-pricing formulas. These representations make the direction and magnitude of feedback transparent, and allow the implications of benchmark and imbalance assumptions to be quantified.

\subsection{Contract structure and motivating events}\label{sec:contracts}
The two episodes differ in their contract structure and delivery constraints, but both feature a class of market participants that cannot remain in the futures contract through physical settlement. We first describe the institutional features of the WTI and LME nickel markets and the circumstances surrounding the April 2020 oil's price collapse and the March 2022 nickel squeeze. We then show how these distinct settings give rise to the common mechanism studied in the paper: forced liquidation by delivery constrained traders and the resulting pressure on observed futures prices. Additional contract specifications and delivery details for the two markets are summarized in Appendix \ref{app:contracts}.

\subsubsection{NYMEX WTI crude oil and the April 2020 collapse}
The NYMEX West Texas Intermediate (WTI) crude oil futures contract is physically settled by delivering 1{,}000 barrels of WTI-grade crude at Cushing, Oklahoma. Trading ends shortly before the delivery month according to the schedule specified by the exchange \citep{CMERulebook200}. Physical settlement requires access to storage, pipeline capacity, and the operational arrangements necessary to receive or deliver crude at Cushing.

Many financial participants do not possess this infrastructure and therefore close or roll their front-month positions before the delivery process begins. Producers, refiners, merchants, and other commercial participants may have the arrangements needed to remain in the contract through settlement, although commercial status alone does not guarantee delivery capacity. Throughout the paper, we refer to these groups as unlicensed and licensed traders, respectively, where the terminology describes the delivery capability within the model rather than an actual regulatory classification.

The May~2020 WTI contract provides a striking example of a forced long liquidation, which we call Case~1. On April~20, 2020, its penultimate trading day, the contract initially traded at around $\$17$/bbl and ended the day at $-\$37.63$/bbl. A series of events over the previous weeks had driven down the price of physical barrels and amplified the downward pressure on the futures contract. The COVID-19 pandemic sharply reduced oil demand, while the breakdown of production negotiations between Russia and Saudi Arabia added to supply pressures \citep{CFTC2020WTI}. Although a replacement agreement was announced on April~12, it did not take effect until May~1.

Meanwhile, Cushing inventories reached 59.7 million barrels on April 17, up from 55.0 million barrels one week earlier \citep{EIAcushing}. Open interest in the expiring contract was 108{,}593 lots, corresponding to 108.6 million barrels \citep{CFTC2020WTI}. These conditions made physical delivery particularly unattractive and, for many financial participants, operationally infeasible. The concentration of the price collapse in the final hours is consistent with liquidation pressure, alongside the deterioration in physical-market conditions.

The open-interest data provide additional context, but do not identify which positions were delivery-constrained. Open interest fell from 108{,}593 to 13{,}044 contracts during April~20, implying that approximately $88\%$ of outstanding positions were closed during the day. Relative to its April~2 peak of 634{,}727 contracts, residual open interest on the penultimate trading day was only $2.06\%$, the smallest among the 198 WTI front-month expiration cycles from 2010 to 2024, compared with a median of $5.55\%$. We calculate these statistics using contract-level open-interest data obtained from Bloomberg. These figures document extensive position closure. They are consistent with a large price concession during liquidation, but aggregate open interest alone cannot establish the size or ownership of the constrained imbalance. This distinction motivates the mixing weight $\varepsilon$ introduced in Section~\ref{sec:market_structure}: it measures the intensity of the delivery-constrained positional imbalance relative to total open interest, rather than the fraction of positions remaining after liquidation.

A particularly important source of forced selling is documented by \citet[\S3.5]{Bouchouev2023virtualbarrels}. The Bank of China had offered an OTC oil-linked product, YuanYouBao, to domestic retail investors and hedged its exposure through long NYMEX WTI futures positions. Those positions had to be liquidated on April~20 through the Trading-at-Settlement (TAS) mechanism. As selling pressure intensified, TAS liquidity became insufficient even at the minimum permitted value, and the remaining orders had to be executed in the regular order book during the final part of the session \citep{Bouchouev2020negative}. In the terminology of our model, this is a Case~1 event, in which unlicensed traders are net long and must sell into a limited pool of licensed traders.

\subsubsection{LME three-month nickel and the March 2022 squeeze}

The March~2022 LME nickel squeeze provides the mirror-image Case~2 event. The LME nickel contract differs from WTI in several important respects. The LME offers daily prompt dates out to three months and quotes a rolling three-month tenor; an individual position nevertheless has a fixed delivery date. Physical delivery takes place through LME-approved warehouses and requires eligible Class~I refined nickel. Much of global nickel production, however, consists of Class~II material that is not deliverable against the LME contract. Consequently, even a physical nickel producer may be unable to satisfy an LME short position using its own production \citep{OliverWyman2023}.

This distinction became critical during March~7--8, 2022. Concerns about the availability of deliverable nickel intensified
following Russia's invasion of Ukraine, contributing to reduced
liquidity ahead of the squeeze \citep{OliverWyman2023}. Warranted nickel stocks in LME warehouses had declined from 173,000 tonnes on September 15, 2021, to approximately 80,000 tonnes by February 24, 2022
\citep[p.~15]{OliverWyman2023}. Against this backdrop, the nickel futures price rose sharply, reaching $\$101{,}365$/ton on March~8 before the LME suspended trading and subsequently canceled that day's transactions. The episode led to an FCA enforcement investigation \citep{FCA2023LME} and litigation concerning the cancellation of trades \citep{Elliott2023}.

Large short positions and rising margin requirements contributed
to the squeeze. Some producers used LME contracts to hedge Class II nickel, which was not deliverable against those contracts. As prices
rose, participants reduced their short exposures through purchases, adding further upward pressure in a market with limited liquidity \citep{OliverWyman2023}. This mechanism illustrates Case~2:
delivery constrained shorts face pressure to buy back their positions.

\subsubsection{A common feedback mechanism}

The episodes motivate two directions of the same feedback mechanism. In Case~1, constrained traders are net long and closing their positions puts downward pressure on the observed price. In Case~2, they are net short and covering puts upward pressure on the price. For WTI, the relevant constraint is access to storage and delivery arrangements at Cushing. For nickel, it includes the mismatch between a producer's output and the exchange's deliverable grade.
Section~\ref{sec:market_structure} formalizes the position imbalance.

Our approach is related to models in which trading changes the price dynamics faced by the traders themselves. \citet{FreyStremme1997} examine how the implementation of dynamic hedging strategies changes the underlying asset's price dynamics and volatility. \citet{SircarPapanicolaou1998} account for the effect of dynamic hedging on the underlying asset price and, in turn, on derivative values. Their construction leads to a nonlinear pricing PDE and explains how hedging demand can increase volatility. In commodity markets, \citet*{Sircar2015feedback} introduce portfolio-optimizing investors alongside traditional commodity demand. Their feedback model studies how investors' positions affect commodity prices and their correlation with financial markets. \citet{pavlova2016model} develop an equilibrium model with institutional investors whose performance is measured relative to a commodity index. They show how financialization affects futures prices, volatilities and correlations, and transmits shocks to spot prices and inventories when storage is included. \citet{SockinXiong2015} examine an informational feedback channel in which commodity prices influence producers' decisions by conveying information about economic conditions, thereby affecting commodity demand. More directly related to the WTI episode, \citet{Ma2021strategic} develop a three-player game-theoretic model in which strategic interactions can generate negative oil prices.

The common feature is that financial trading enters price formation. Our model focuses on a specific source of this interaction: the need to close positions when physical delivery is unavailable. Feedback acts through a roll-option value, which depends on the same observed price that it helps determine. The benchmark is supplied externally, and the imbalance is held fixed within each valuation. The model therefore complements the broader equilibrium and trading mechanisms above. Its main analytical contribution is an exact reduction of this particular nonlinear pricing problem to standard option formulas, coupled with a scalar equation.

\subsection{Contributions}
The paper makes three related contributions. First, it links market segmentation to a pricing distortion. A normalized imbalance of unlicensed long and short positions determines the weight $\varepsilon$, while a roll option measures the cost associated with closing the excess position. The resulting feedback relation accommodates both downward and upward distortions. 

Second, we derive and solve the associated nonlinear pricing problem. Under lognormal benchmark dynamics, homogeneity reduces the PDE to one state variable, and a change of variables converts it to the classical Margrabe equation. The resulting exact representation expresses the solution through a standard put or call Black-Scholes pricing formula. 
Under normal dynamics, an analogous reduction gives Bachelier representations. 

Third, we use the representation to characterize the implications of the feedback mechanism and to construct event illustrations. In particular, we show how the distortion can drive an observed oil futures price below zero even when the underlying licensed price remains positive, and we identify the role of the position imbalance and calendar-spread dynamics in determining when such a crossing can occur. We then use the WTI and nickel episodes to illustrate the magnitude and direction of the resulting price distortions.

\section{Feedback Model}

As discussed earlier, commodity futures markets differ fundamentally from equity markets because they include distinct classes of participants with different objectives and delivery capabilities. Producers, merchants, and commodity trading firms primarily use futures contracts to hedge the price risk of physical inventories or future production. By contrast, financial participants trade futures for speculation, portfolio diversification, or financial hedging.

To capture this segmentation, we distinguish between two types of traders, based on the delivery rules of major commodity exchanges such as the CME.

\begin{enumerate}
    \item \textbf{Licensed traders}. These are producers, merchants, and commodity trading firms that are authorized to make or take physical delivery at contract maturity.

    \item \textbf{Unlicensed traders}. These include hedge funds and other financial participants who trade commodity futures for speculative or hedging purposes but are not authorized to participate in physical delivery.
\end{enumerate}

The key distinction is that an unlicensed trader must close its position before contract maturity, whereas a licensed trader may hold the contract until expiration and either make or take physical delivery. This reflects the market structure observed in both canonical events, where speculative participants were required to liquidate their positions before maturity. Maintaining an open position beyond that point without delivery eligibility is generally prohibited by exchange rules and may result in financial penalties or trading restrictions.

We introduce this market segmentation to describe the observed futures price, defined as $F^{\mathrm{obs}}_{t,T}$. Let $F^{\mathrm{lic}}_{t,T}$ denote the licensed price, namely the price that would prevail if all market participants were licensed and no forced liquidation occurred. The difference between the two arises from the presence of the unlicensed traders and is captured by a third quantity, the distorted futures price $H_{t,T,T_2}$. This is the price at which the unlicensed component of the market clears.

Throughout the paper, $F_{t,T_2}$ denotes the observed price of a later-month futures contract, where $T_2>T$. Since this contract matures well after the front-month contract, it is not subject to the same liquidation pressure and is therefore treated as undistorted. In the WTI oil application, this is the next-month contract.

The following sections derive the relationship between $F^{\mathrm{obs}}_{t,T}$, $F^{\mathrm{lic}}_{t,T}$, $H_{t,T,T_2}$, and $F_{t,T_2}$ from a market-clearing argument based on the open-interest composition of the futures market.

\subsection{Market Structure and Open Interest}\label{sec:market_structure}

At any time \(t < T\), the open interest in the contract maturing at \(T\) decomposes by trader type and side of the trade. Let \(\ell^L_t, \ell^U_t\) denote the long positions held by licensed and unlicensed traders at time \(t\), respectively, and \(s^L_t, s^U_t\) the corresponding short positions. We summarize the decomposition in the open-interest matrix
\begin{equation}\label{eq:oi_matrix}
\begin{array}{c|cc}
 & \text{Long} & \text{Short} \\
\hline
\text{Licensed} & \ell^L_t & s^L_t \\
\text{Unlicensed} & \ell^U_t & s^U_t
\end{array}
\end{equation}
Since every futures contract has exactly one long and one short counterparty, the total open interest is 
\begin{equation}\label{eq:oi_balance}
    \mathrm{OI}_t \;=\; \ell^L_t + \ell^U_t \;=\; s^L_t + s^U_t.
\end{equation}

The delivery constraint, as we discussed earlier, introduces an asymmetry in the two rows of the table. In particular, by maturity \(T\), all unlicensed positions must be closed, while licensed positions can remain open and proceed to physical settlement. 
The delivery constraint introduces an asymmetry between the two classes of traders. Let \(T'<T\) denote the time by which unlicensed positions must be closed before physical settlement at \(T\). Licensed positions may remain open beyond \(T'\) and proceed to delivery. Consequently, the net unlicensed exposure immediately prior to liquidation,
\[
\ell_{T'}^U-s_{T'}^U,
\]
determines the direction of the forced order flow. This creates two distinct regimes identified as Case 1 and Case 2 for the rest of our paper.

\subsubsection*{Case 1: Net unlicensed long positions ($\ell^U_{T'} > s^U_{T'}$)}
The first case occurs when within the unlicensed segment the long positions outnumber the short positions. After the $\ell_{T'}^U-s_{T'}^U$ excess unlicensed long positions are closed against licensed shorts, the number of surviving licensed shorts is
$s_{T'}^L-\left(\ell_{T'}^U-s_{T'}^U\right)=\ell_{T'}^L$,
where the equality follows from the open-interest identity \eqref{eq:oi_balance}. Thus, after all unlicensed positions have been cleared, the remaining licensed market is balanced, with $\ell_{T'}^L$ long and $\ell_{T'}^L$ short positions, consistent with physical delivery.

We model the observed futures price as the volume-weighted average of 
quantities associated with the two classes of traders: 
\begin{equation}\label{eq:f_obs_case1}
    F^{\text{obs}}_{t,T} \;=\; \frac{\ell^L_{T'} + s^U_{T'}}{\mathrm{OI}_{T'}}\,F^{\text{lic}}_{t,T} \;+\; \frac{\ell^U_{T'} - s^U_{T'}}{\mathrm{OI}_{T'}}\,H_{t,T,T_2}.
\end{equation}
The weights are nonnegative and sum to one by \eqref{eq:oi_balance}. The deviation of \(F^{\text{obs}}_{t,T}\) from the fundamental is thus governed by the gap between \(H_{t,T,T_2}\) and \(F^{\text{lic}}_{t,T}\), scaled by the net unlicensed long imbalance \((\ell^U_{T'} - s^U_{T'})/\mathrm{OI}_{T'}\). The explicit characterization of \(H_{t,T,T_2}\) is deferred to Section~\ref{sec:speculative}.

\subsubsection*{Case 2: Net unlicensed short positions ($s^U_{T'} > \ell^U_{T'}$)}
The mirror situation occurs when unlicensed shorts dominate the segment. A net unlicensed short of size \(s^U_{T'} - \ell^U_{T'}\) must close by buying from licensed longs. By the analogous argument, after all unlicensed positions clear, the remaining licensed positions satisfy the volume-weighted average observed futures price given by
\begin{equation}\label{eq:f_obs_case2}
    F^{\text{obs}}_{t,T} \;=\; \frac{s^L_{T'} + \ell^U_{T'}}{\mathrm{OI}_{T'}}\,F^{\text{lic}}_{t,T} \;+\; \frac{s^U_{T'} - \ell^U_{T'}}{\mathrm{OI}_{T'}}\,H_{t,T,T_2},
\end{equation}
where here \(H_{t,T,T_2}\) denotes the distorted price of an unlicensed short, characterized in Section~\ref{sec:speculative} via Corollary~\ref{cor:short_side_pde}.

Equations \eqref{eq:f_obs_case1} and \eqref{eq:f_obs_case2} share the structure of a convex combination between the licensed price $F^{\text{lic}}_{t,T}$ and the distorted price $H_{t,T,T_2}$. Defining the net unlicensed imbalance, as a fraction of the net unlicensed exposure over total open interest
\begin{equation}\label{eq:g_endogenous}
\varepsilon \;:=\; \frac{|\ell^U_{T'} - s^U_{T'}|}{\mathrm{OI}_{T'}},
\end{equation}
we obtain the unified pricing formula
\begin{equation}\label{eq:f_obs_formula}
    F^{\text{obs}}_{t,T} \;=\; (1 - \varepsilon)\,F^{\text{lic}}_{t,T} \;+\; \varepsilon\,H_{t,T,T_2}.
\end{equation}
\begin{assumption}
    Throughout this paper, we assume $\varepsilon<1$.
\end{assumption}

\subsection{Fundamental Futures Pricing}\label{sec:fundamental}
The framework we develop in this paper is built on top of any standard commodity-pricing model for the licensed futures price \(F^{\mathrm{lic}}_{t,T}\) in equation~\eqref{eq:f_obs_formula}.
The rest of the equation provides a structural mechanism that maps this fundamental price into the observed futures price.

Several standard frameworks can supply \(F^{\mathrm{lic}}\) to our model. These include classical cost-of-carry models with storage costs and convenience yields, lognormal dynamics commonly used for futures prices, and the models of \citet{Schwartz1997}, which incorporate
mean reversion in commodity prices and, in their multi-factor specifications, stochastic convenience yields and interest rates. These models, together with a broader overview of the commodity-pricing literature, are presented in \citet{Bouchouev2023virtualbarrels}.

These classical frameworks produce a fundamental futures price that is, by construction, positive and consistent with the absence of arbitrage. The mechanism introduced in the following subsection adds a structural correction that generates the extreme price distortions observed in practice due to exit pressure. The main structural component of our framework is therefore the mapping \(F^{\mathrm{lic}} \mapsto F^{\mathrm{obs}}\) through the recursive feedback equation derived below in Section~\ref{sec:speculative}.

\subsection{Speculative Futures Pricing}\label{sec:speculative}
The distorted futures price follows from a market-clearing argument. We consider separately the two possible imbalances in unlicensed positions.

In Case~1, unlicensed traders hold an excess long position that must be closed before maturity. These traders are therefore forced to sell their contracts, while only licensed traders can take the opposite side of the transaction. This creates one-sided selling pressure and can push the observed futures price below its fundamental value. 
The cost of this restriction can be represented through a roll option. A licensed trader can hold the front-month futures contract through maturity and take delivery of the commodity. An unlicensed trader cannot. To maintain the same exposure beyond maturity, the unlicensed trader must instead close the front-month position and replace it with a position in the next-month futures contract. We model this transaction as the long-side roll option, \(R^{(+)}\). 
The value of this option measures the cost associated with being unable to take delivery and being forced to close the front-month position. We therefore interpret \(R^{(+)}\) as the close-out premium borne by an unlicensed long trader. The corresponding distorted price is $H_{t,T,T_2}=F^{\text{lic}}_{t,T}-R^{(+)}_t$.

In Case~2, the direction of the imbalance is reversed. Unlicensed traders hold an excess short position that must be closed before maturity. They are therefore forced to buy the front-month contract to cover their positions. This one-sided buying pressure can push the observed futures price above its fundamental value. 
The corresponding restriction is represented by the short-side roll option \(R^{(-)}\). Its value measures the additional cost faced by an unlicensed short trader who must buy back the front-month contract rather than settle through delivery. We interpret \(R^{(-)}\) as the squeeze premium generated by forced short covering: $H_{t,T,T_2}=F^{\text{lic}}_{t,T}+R^{(-)}_t$.

We now define these two roll options more precisely. Both are non-negative option payoffs written on the spread between the front-month and a later-month contract. Which option is relevant is determined by the open-interest imbalance: \(R^{(+)}\) applies when unlicensed traders are net long, while \(R^{(-)}\) applies when they are net short.

Because the underlying commodity is not itself directly tradable on the futures exchange, we represent its value at time \(T\) using the discounted value of the next-month futures contract. Let \(F^{\text{obs}}_{t,T}\) denote the observed futures price at time $t$ of the front-month futures contract whose maturity is \(T\), and let \(F^{\text{obs}}_{T,T_2}\) denote the futures price at time \(T\) of the contract maturing at \(T_2>T\). Let
\(\Delta = T_2-T\) denote the time between the two maturities, typically one month, and let \(D := e^{-r\Delta}\) denote the discount factor from \(T_2\) to \(T\), assumed constant. The factor $D$ discounts between the two delivery dates $T$ and
$T_2$. Separately, we express the roll option terms in futures price units: they represent conditional expected terminal payoffs rather than time-$t$ present values. Under deterministic interest rates, their discounted counterparts are conventional option values, as in \citet{black1976futures}. 

The terminal payoffs of the two roll options are then
\begin{equation}
    R^{(+)}_T
    =
    \left(
        DF^{\text{obs}}_{T,T_2}
        -
        F^{\text{obs}}_{T,T}
    \right)^{+},
    \qquad
    R^{(-)}_T
    =
    \left(
        F^{\text{obs}}_{T,T}-DF^{\text{obs}}_{T,T_2}
    \right)^{+}.
\label{eq:r_terminal_val}
\end{equation}
Thus, \(R^{(+)}_T\) is positive when the front-month contract trades below the discounted next-month contract (contango), while \(R^{(-)}_T\) is positive when the front-month contract trades above it (backwardation). The two options therefore capture distortions in opposite directions.

The next-month futures contract is itself traded on the exchange. However, because its maturity \(T_2\) remains sufficiently far from the front-month maturity \(T\), it is not yet exposed to the same forced-liquidation mechanism. We therefore assume that its observed price coincides with its licensed price over the relevant period.
An unlicensed trader can trade only the market futures contract and must close any remaining position by the end of the last trading session. Consequently, the trader is exposed to the prevailing observed price \(F^{\text{obs}}_T\) at forced liquidation.

Combining this with \eqref{eq:f_obs_formula}, we obtain
\begin{equation}\label{eq:f_obs_reduced}
    F^{\text{obs}}_{t,T}
    =
    \begin{cases}
        F^{\text{lic}}_{t,T}
        -
        \varepsilon R^{(+)}_{t},
        &
        \ell^{U}_{T'} > s^{U}_{T'}
        \quad \text{(Case 1)},\\[4pt]
        F^{\text{lic}}_{t,T}
        +
        \varepsilon R^{(-)}_{t},
        &
        \ell^{U}_{T'} < s^{U}_{T'}
        \quad \text{(Case 2)},
    \end{cases}
    \qquad
    \varepsilon
    =
    \frac{
        \left|\ell^{U}_{T'}-s^{U}_{T'}\right|
    }{
        \mathrm{OI}_{T'}
    }.
\end{equation}
The deviation from the licensed price comes from the roll option, which determines the value of the constraint faced by unlicensed traders, while \(\varepsilon\) measures the size of the unlicensed position imbalance relative to total open interest. 

In Case~1, \(R^{(+)}_t\) measures the close-out premium associated with forced selling by unlicensed long traders. Because these traders cannot take delivery, they must liquidate their positions even when the front-month price is depressed. This mechanism pushes
$F^{\text{obs}}_{t,T}<F^{\text{lic}}_{t,T}$, 
as in the April 2020 WTI episode. 
In Case~2, \(R^{(-)}_t\) measures the squeeze premium associated with forced buying by unlicensed short traders. These traders must purchase contracts to close their short positions, generating upward price pressure and potentially producing $F^{\text{obs}}_{t,T}>
F^{\text{lic}}_{t,T}$, 
as in a classical short squeeze, such as the March 2022 nickel episode.

Thus, the same market-clearing mechanism generates distortions in either direction. An excess unlicensed long position activates \(R^{(+)}\) and creates downward pressure, while an excess unlicensed short position activates \(R^{(-)}\) and creates upward pressure. The composition of open interest determines both the direction of the distortion and, through \(\varepsilon\), its magnitude.

Since the roll option term adjusts the futures quote, we use
its undiscounted conditional expectation:
\begin{equation}\label{eq:forward_convention}
    R^{(\pm)}_t
    :=
    \mathbb{E}^{\mathbb{Q}}
    \!\left[
        R^{(\pm)}_T
        \mid
        \mathcal{F}_t
    \right],
    \qquad t\in[0,T].
\end{equation}
Consequently, the $R^{(\pm)}$ terms reference $F^{\text{obs}}$ through \eqref{eq:r_terminal_val}, which then {\em feeds back} into $F^{\text{obs}}$ through \eqref{eq:f_obs_reduced}. Putting it together gives
\begin{equation}\label{eq:feedback}
    F^{\text{obs}}_{t,T}
    =
    \begin{cases}
        F^{\text{lic}}_{t,T}-\varepsilon \mathbb{E}^{\mathbb{Q}}\left[\left(DF^{\text{obs}}_{T,T_2} - F^{\text{obs}}_{T,T}\right)^{+}\mid  \mathcal{F}_t\right],
        &\text{(Case 1)},\\
        \\
        F^{\text{lic}}_{t,T}
        +        \varepsilon \mathbb{E}^{\mathbb{Q}}
    \left[
      \left(
        F^{\text{obs}}_{T,T}-DF^{\text{obs}}_{T,T_2}
    \right)^{+}
        \mid
        \mathcal{F}_t
    \right],
        &\text{(Case 2)},
        \end{cases}
\end{equation}
which reveals the nonlinearity of the pricing mechanism. 

\subsection{Roll-Option Pricing PDE}\label{sec:pde}

Pricing the distortion thus reduces to pricing the operative roll-option value \(V^\pm\), with one essential complication. The terminal payoff~\eqref{eq:r_terminal_val} of \(R^{(\pm)}\) involves \(F^{\text{obs}}_T\), which is itself constructed from \(V^\pm\) through \eqref{eq:f_obs_reduced}. The option is therefore self-referential and coupled to its own price through the feedback equation, and the standard Black-style replication argument must be extended to internalize this coupling. 

We work with the following geometric Brownian motion (GBM) model. Under the risk--neutral measure \(\mathbb{Q}\), suppose
\begin{equation}
\frac{dF^{\text{lic}}_{t,T}}{F^{\text{lic}}_{t,T}} = \sigma_1(t)\,dW^1_t,
\qquad
\frac{dF^{\text{obs}}_{t,T_2}}{F^{\text{obs}}_{t,T_2}} = \sigma_2(t)\,dW^2_t, \label{dF2eqn}
\qquad
d\langle W^1, W^2\rangle_t = \rho\,dt,
\end{equation}
 where \(W^1,W^2\) are \(\mathbb{Q}\)-Brownian motions, \(\sigma_1(\cdot),\sigma_2(\cdot)>0\), and the correlation coefficient \(\rho\in(-1,1)\) is constant.
We will see that, as our feedback mechanism is implemented pointwise in time, the resulting dynamics for $(F^{\text{obs}}_{t,T}, F^{\text{obs}}_{t,T_2})$ will be Markovian with respect to the filtration generated by \((W^1,W^2)\), and so we look for 
 \[ R_t^{\pm} = V^{\pm}(t, F^{\text{obs}}_{t,T}, F^{\text{obs}}_{t,T_2}), \]
 where \(V^{\pm}(t,f_1,f_2)\) are deterministic functions:
 \begin{equation}
  V^{\pm}(t,f_1,f_2) = \mathbb{E}^{\mathbb{Q}}\!\bigl[R^{(\pm)}_T \,\big|\, F^{\text{obs}}_{t,T}=f_1, F^{\text{obs}}_{t,T_2}=f_2\bigr], \quad t \in [0, T], f_1\in\R_+, f_2\in\R_+,
  \label{Vpmdef}
\end{equation}
with $R^{\pm}_T$ defined in \eqref{eq:r_terminal_val}. 
The possible values $f_2$ of $F^{\text{obs}}_{t,T_2}$ are strictly positive, the next month futures price is a GBM (and it is not affected by the distortion feedback). We also take $f_1$ to be strictly positive for now, since traders are pricing under the presumption that $F^{\text{obs}}_{t,T}$ also cannot go negative since the baseline futures price $F^{\text{lic}}_{t,T}$ is also GBM. We address how the model changes if the observed $F^{\text{obs}}_{t,T}$ does become negative in Section \ref{sec:bachelier}.

We start with Case 1, where
\begin{equation}\label{eq:long_feedback_relation}
    F^{\text{obs}}_{t,T}
    \;=\;
    F^{\text{lic}}_{t,T}
    - \varepsilon V^+\!\left(t,F^{\text{obs}}_{t,T},F^{\text{obs}}_{t,T_2}\right).
\end{equation}
The following proposition characterizes \(V^{+}\) as the solution of a nonlinear feedback PDE. 
\begin{proposition}\label{prop:roll_pde}
Suppose that $V^+(t, f_1, f_2)\in\mathcal{C}^{1,2,2}([0,T)\times\R_+\times\R_+)$. Then $V^+$ solves
\begin{equation}\label{eq:pde_vplus}
    V_t^+
    + \tfrac{1}{2}\,J^+\,\sigma_1^2(t)\,f_1^2\,V^+_{f_1f_1}
    + K^+\,\sigma_1(t)\sigma_2(t)\,f_1 f_2\,V^+_{f_1f_2}
    + \tfrac{1}{2}\,\sigma_2^2(t)\,f_2^2\,V^+_{f_2f_2}
    = 0,
\end{equation}
with diffusion multipliers
\begin{equation*}\label{eq:JK_plus}
    J^+ := \frac{(B^+)^2 + \varepsilon^2 (C^+)^2 - 2\rho\varepsilon B^+ C^+}{(A^+)^2},
    \qquad
    K^+ := \frac{\rho B^+ - \varepsilon C^+}{A^+},
\end{equation*}
given in terms of the feedback coefficients
\begin{equation}\label{eq:ABC_plus}
    A^+ := 1 + \varepsilon V^+_{f_1},
    \qquad
    B^+ := 1 + \varepsilon\,\frac{V^+}{f_1},
    \qquad
    C^+ := \left(\frac{f_2}{f_1}\right) \frac{\sigma_2(t)}{\sigma_1(t)}\,V^+_{f_2}.
\end{equation}
The terminal condition is
\begin{equation}
    V^+(T,f_1, f_2) \;=\; \bigl(D f_2 - f_1\bigr)^{+}.
    \label{terminalplus}
\end{equation}
\end{proposition}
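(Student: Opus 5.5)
Using the feedback relation \eqref{eq:long_feedback_relation}, I will derive the dynamics of $F^{\text{obs}}_{t,T}$ implied by It\^o's formula, and then impose that $V^+(t,F^{\text{obs}}_{t,T},F^{\text{obs}}_{t,T_2})$ is a $\mathbb{Q}$-martingale. That martingale property holds by definition \eqref{Vpmdef}, since $R^{(+)}_t$ is an undiscounted conditional expectation of a terminal payoff. Write $X_t=F^{\text{obs}}_{t,T}$ and $Y_t=F^{\text{obs}}_{t,T_2}$, and postulate the It\^o representation $dX_t=\mu_t\,dt+a_t\,dW^1_t+b_t\,dW^2_t$. The GBM dynamics \eqref{dF2eqn} give $dY_t=\sigma_2 Y_t\,dW^2_t$ and $dF^{\text{lic}}_{t,T}=\sigma_1 F^{\text{lic}}\,dW^1_t$. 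Rearranging the feedback relation gives $F^{\text{lic}}=X+\varepsilon V^+(t,X,Y)$.

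First, I apply It\^o to $F^{\text{lic}}=X+\varepsilon V^+(t,X,Y)$ and match the martingale coefficients. The $dW^1$ coefficient gives
\[
(1+\varepsilon V^+_{f_1})a=\sigma_1 F^{\text{lic}}=\sigma_1 f_1 B^+ .
\]
Here I have used $F^{\text{lic}}=f_1(1+\varepsilon V^+/f_1)$. The $dW^2$ coefficient gives
\[
(1+\varepsilon V^+_{f_1})b+\varepsilon V^+_{f_2}\sigma_2 f_2=0,
\]
so that
\[
a=\frac{\sigma_1 f_1 B^+}{A^+},\qquad b=-\frac{\varepsilon\sigma_2 f_2 V^+_{f_2}}{A^+}=-\frac{\varepsilon\sigma_1 f_1 C^+}{A^+}.
\]
This step requires $A^+\neq 0$, which I will state as an implicit nondegeneracy condition. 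It is natural anyway, since $|V^+_{f_1}|\le 1$ for the put-like payoff and $\varepsilon<1$.

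Second, I compute the quadratic variations. I get
\[
d\langle X\rangle=(a^2+b^2+2\rho ab)\,dt=\sigma_1^2f_1^2\,\frac{(B^+)^2+\varepsilon^2(C^+)^2-2\rho\varepsilon B^+C^+}{(A^+)^2}\,dt,
\]
which is $J^+\sigma_1^2f_1^2\,dt$. The covariation is
\[
d\langle X,Y\rangle=\sigma_2 f_2(\rho a+b)\,dt=\sigma_1\sigma_2f_1f_2\,\frac{\rho B^+-\varepsilon C^+}{A^+}\,dt=K^+\sigma_1\sigma_2 f_1f_2\,dt .
\]

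Third, I write the drift of $V^+(t,X,Y)$ by It\^o:
\[
V^+_t+\mu V^+_{f_1}+\tfrac12 J^+\sigma_1^2f_1^2V^+_{f_1f_1}+K^+\sigma_1\sigma_2f_1f_2V^+_{f_1f_2}+\tfrac12\sigma_2^2f_2^2V^+_{f_2f_2}.
\]
This drift must vanish by the martingale property. The delicate step, which I expect to be the main obstacle, is disposing of the $\mu V^+_{f_1}$ term. I will handle it using the drift identity from the first step. Since $F^{\text{lic}}$ is driftless, $X+\varepsilon V^+$ must be driftless, so $\mu+\varepsilon\,\mathrm{drift}(V^+)=0$. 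Writing $\mathcal{L}$ for the drift of $V^+$ (the displayed expression above), this reads $\mu=-\varepsilon\mathcal{L}$. Since $\mathcal{L}$ itself contains $\mu V^+_{f_1}$, separate that term by writing $\mathcal{L}=\mu V^+_{f_1}+\mathcal{G}$, where $\mathcal{G}$ is the generator part \eqref{eq:pde_vplus} without the $\mu$ term. Then $\mu=-\varepsilon(\mu V^+_{f_1}+\mathcal{G})$, hence $\mu A^+=-\varepsilon\mathcal{G}$.

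The martingale condition $\mathcal{L}=0$ then reads $\mathcal{G}(1-\varepsilon V^+_{f_1}/A^+)=0$, that is $\mathcal{G}/A^+=0$, so $\mathcal{G}=0$, which is exactly \eqref{eq:pde_vplus}. Equivalently, the martingale property of $V^+$ together with that of $F^{\text{lic}}$ forces $X$ itself to be driftless. I will argue this cleanly by noting that $\mathcal{L}=0$ and the driftlessness of $X+\varepsilon V^+$ give $\mu=0$.

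To pass from "the drift vanishes along paths" to the pointwise PDE, I will use the regularity $\mathcal{C}^{1,2,2}$ and the fact that $(X,Y)$ has a nondegenerate diffusion with full support on $\R_+^2$ (as $\rho\in(-1,1)$). Finally, the terminal condition \eqref{terminalplus} follows directly from \eqref{eq:r_terminal_val} evaluated at $t=T$.
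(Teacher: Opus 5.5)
Your proposal is correct and follows the paper's argument: both match the $dW^1$, $dW^2$ and $dt$ coefficients of It\^o's formula for $V^+(t,F^{\text{obs}}_{t,T},F^{\text{obs}}_{t,T_2})$ against the feedback identity $F^{\text{lic}}_{t,T}=F^{\text{obs}}_{t,T}+\varepsilon V^+$, which gives the same loadings ($a=\beta$, $b=\gamma$) and hence $J^+$ and $K^+$. The differences are cosmetic. You carry a drift $\mu$ for $F^{\text{obs}}_{t,T}$ and then show it vanishes, whereas the paper gets driftlessness at once because $F^{\text{obs}}_{t,T}=F^{\text{lic}}_{t,T}-\varepsilon V^+$ is a difference of martingales. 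Your remarks on $A^+\neq0$ and on passing from the path-wise drift condition to a pointwise PDE are details the paper leaves implicit.
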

\begin{proof}
Under the forward-pricing convention~\eqref{eq:forward_convention}, $V^+$ and $F^{\text{lic}}_{t,T}$ are $\mathbb{Q}$-martingales.
As the difference of two martingales, $F^{\text{obs}}_{t,T}$ is an It\^o martingale, and it admits the representation
\begin{equation}\label{eq:fobs-dyn}
    dF^{\text{obs}}_{t,T} \;=\; \beta_t\,dW^1_t + \gamma_t\,dW^2_t.
\end{equation}
Applying It\^o's formula to $V^+\!\left(t,F^{\text{obs}}_{t,T},F^{\text{obs}}_{t,T_2}\right)$ and using \eqref{eq:fobs-dyn} together with \eqref{dF2eqn}, we have 
\begin{align}\label{eq:ito-V}
    dV^+ \;=\; & V^+_t\,dt + V^+_{f_1}\bigl(\beta_t\,dW^1_t + \gamma_t\,dW^2_t\bigr) + V^+_{f_2}\,f_2\sigma_2(t)\,dW^2_t\notag\\
    &+ \tfrac{1}{2}V^+_{f_1 f_1}\bigl(\beta_t^2 + 2\rho\beta_t\gamma_t + \gamma_t^2\bigr)\,dt + \tfrac{1}{2}V^+_{f_2 f_2}\,f_2^2\sigma_2^2(t)\,dt + V^+_{f_1 f_2}\bigl(\rho\beta_t + \gamma_t\bigr)\,f_2\sigma_2(t)\,dt.
\end{align}
On the other hand, from \eqref{eq:long_feedback_relation}, we have
\begin{equation}\label{eq:rep-V}
    \varepsilon\,dV^+ \;=\; \Bigl[\bigl((f_1 + \varepsilon V^+)\sigma_1(t) - \beta_t\bigr)\,dW^1_t \;-\; \gamma_t\,dW^2_t\Bigr].
\end{equation}
Equating the $dW^1_t$ coefficients in \eqref{eq:ito-V} and \eqref{eq:rep-V} yields
\begin{equation}\label{eq:beta}
    \beta_t \;=\; \frac{(f_1 + \varepsilon V^+)\,\sigma_1(t)}{A^+}, \qquad A^+ := 1 + \varepsilon V^+_{f_1},
\end{equation}
while from the $dW^2_t$ coefficients, we obtain
\begin{equation}\label{eq:gamma}
    \gamma_t \;=\; -\,\frac{\varepsilon V^+_{f_2}\,f_2\sigma_2(t)}{A^+}.
\end{equation}
Finally, from the $dt$ coefficients, as the right-hand side of \eqref{eq:rep-V} is purely stochastic, we have
\begin{equation}\label{eq:dV-drift}
    0 \;=\; V^+_t + \tfrac{1}{2}V^+_{f_1 f_1}\bigl(\beta_t^2 + 2\rho\beta_t\gamma_t + \gamma_t^2\bigr) + \tfrac{1}{2}V^+_{f_2 f_2}\,f_2^2\sigma_2^2(t) + V^+_{f_1 f_2}\bigl(\rho\beta_t + \gamma_t\bigr)\,f_2\sigma_2(t).
\end{equation}
Substituting \eqref{eq:beta} and \eqref{eq:gamma} into \eqref{eq:dV-drift} yields the PDE \eqref{eq:pde_vplus}. The terminal condition \eqref{terminalplus} follows directly from the payoff in \eqref{eq:r_terminal_val}.
\end{proof}

\begin{corollary}[Case 2 roll-option PDE]\label{cor:short_side_pde}
Suppose that \(V^-(t,f_1,f_2)\in \mathcal{C}^{1,2,2}([0,T)\times\R_+\times\R_+)\).
In Case~2, unlicensed short positions must be covered by buying before maturity, so
\[
F^{\mathrm{obs}}_{t,T}
=
F^{\mathrm{lic}}_{t,T}
+
\varepsilon
V^-\!\left(t,F^{\mathrm{obs}}_{t,T},F^{\mathrm{obs}}_{t,T_2}\right).
\]
Then \(V^-\) solves
\begin{equation}\label{eq:pde_vminus}
    V^-_t
    + \tfrac{1}{2}\,J^-\,\sigma_1^2(t)\,f_1^2\,V^-_{f_1f_1}
    + K^-\,\sigma_1(t)\sigma_2(t)\,f_1 f_2\,V^-_{f_1f_2}
    + \tfrac{1}{2}\,\sigma_2^2(t)\,f_2^2\,V^-_{f_2f_2}
    = 0,
\end{equation}
with terminal condition
\begin{equation}\label{eq:vminus_terminal}
    V^-(T,f_1,f_2)
    =
    \bigl(f_1-Df_2\bigr)^+,
\end{equation}
and diffusion multipliers
\begin{equation}\label{eq:JK_minus}
    J^- :=
    \frac{(B^-)^2+\varepsilon^2(C^-)^2
    +2\rho\varepsilon B^-C^-}{(A^-)^2},
    \qquad
    K^- :=
    \frac{\rho B^-+\varepsilon C^-}{A^-}
\end{equation}
given in terms of the feedback coefficients
\begin{equation}\label{eq:ABC_minus}
    A^- := 1-\varepsilon V^-_{f_1},
    \qquad
    B^- := 1-\varepsilon\frac{V^-}{f_1},
    \qquad
    C^- :=
    \left(\frac{f_2}{f_1}\right)
    \frac{\sigma_2(t)}{\sigma_1(t)}
    V^-_{f_2}.
\end{equation}
\end{corollary}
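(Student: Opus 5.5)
The plan is to repeat the argument of Proposition~\ref{prop:roll_pde}, with the sign of $\varepsilon$ reversed throughout. Equivalently, the feedback relation $F^{\mathrm{obs}}_{t,T}=F^{\mathrm{lic}}_{t,T}+\varepsilon V^-$ is the Case~1 relation \eqref{eq:long_feedback_relation} with $\varepsilon$ replaced by $-\varepsilon$. The Case~1 derivation never uses the sign of $\varepsilon$; the only positivity it assumes is $A^\pm\neq 0$. So one could simply substitute $\varepsilon\mapsto-\varepsilon$ in $A^+,B^+,C^+,J^+,K^+$. Under this substitution $A^-=1-\varepsilon V^-_{f_1}$ and $B^-=1-\varepsilon V^-/f_1$, while $C$ is unchanged. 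The cross terms $-2\rho\varepsilon BC$ and $-\varepsilon C$ flip to $+2\rho\varepsilon BC$ and $+\varepsilon C$. This reproduces exactly \eqref{eq:JK_minus}--\eqref{eq:ABC_minus}. I will nonetheless sketch the direct verification, to make clear that the algebra goes through.

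\emph{Step 1 (dynamics).} Under \eqref{eq:forward_convention}, $V^-(t,F^{\mathrm{obs}}_{t,T},F^{\mathrm{obs}}_{t,T_2})$ is a $\mathbb{Q}$-martingale. Hence $F^{\mathrm{obs}}_{t,T}=F^{\mathrm{lic}}_{t,T}+\varepsilon V^-$ is a sum of martingales. By the martingale representation in the filtration of $(W^1,W^2)$, we may write $dF^{\mathrm{obs}}_{t,T}=\beta_t\,dW^1_t+\gamma_t\,dW^2_t$.

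\emph{Step 2 (Itô).} Apply Itô's formula to $V^-$ exactly as in \eqref{eq:ito-V}. Differentiating the feedback relation gives
\[
\varepsilon\,dV^-=\bigl(\beta_t-(f_1-\varepsilon V^-)\sigma_1(t)\bigr)\,dW^1_t+\gamma_t\,dW^2_t .
\]
Here I used $F^{\mathrm{lic}}_{t,T}=f_1-\varepsilon V^-$ and $dF^{\mathrm{lic}}=F^{\mathrm{lic}}\sigma_1\,dW^1$.

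\emph{Step 3 (coefficient matching).} Matching the $dW^1$ coefficients gives $\varepsilon V^-_{f_1}\beta_t=\beta_t-(f_1-\varepsilon V^-)\sigma_1$, so
\[
\beta_t=\frac{f_1B^-\sigma_1}{A^-}.
\]
Matching the $dW^2$ coefficients gives $\varepsilon(V^-_{f_1}\gamma_t+V^-_{f_2}f_2\sigma_2)=\gamma_t$, so
\[
\gamma_t=\frac{\varepsilon V^-_{f_2}f_2\sigma_2}{A^-}=\frac{\varepsilon f_1\sigma_1C^-}{A^-}.
\]
The $dt$ coefficient of $dV^-$ must vanish, which gives the analogue of \eqref{eq:dV-drift}.

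\emph{Step 4 (substitution).} Substituting $\beta_t$ and $\gamma_t$ into the drift yields
\[
\beta_t^2+2\rho\beta_t\gamma_t+\gamma_t^2=\frac{\sigma_1^2f_1^2}{(A^-)^2}\bigl((B^-)^2+2\rho\varepsilon B^-C^-+\varepsilon^2(C^-)^2\bigr)=\sigma_1^2f_1^2J^-.
\]
For the mixed term,
\[
(\rho\beta_t+\gamma_t)f_2\sigma_2=\sigma_1\sigma_2f_1f_2\,\frac{\rho B^-+\varepsilon C^-}{A^-}=\sigma_1\sigma_2f_1f_2K^- .
\]
This gives \eqref{eq:pde_vminus}. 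The terminal condition \eqref{eq:vminus_terminal} follows from \eqref{eq:r_terminal_val} evaluated at $t=T$ with $F^{\mathrm{obs}}_{T,T}=f_1$ and $F^{\mathrm{obs}}_{T,T_2}=f_2$.

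The only delicate point is the solvability step in Step~3, which requires $A^-=1-\varepsilon V^-_{f_1}\neq0$. In Case~1, $V^+_{f_1}\in[-1,0]$ for a put-like payoff, so $A^+>0$ is automatic. In Case~2, however, $V^-_{f_1}\in[0,1]$ and $\varepsilon<1$, so $A^->1-\varepsilon>0$ provided the delta bound $0\le V^-_{f_1}\le 1$ holds. I would state this as an implicit standing assumption, the same one implicitly made in the proof of Proposition~\ref{prop:roll_pde}. The remaining steps are routine bookkeeping of signs.
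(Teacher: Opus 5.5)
Your proposal is correct and is essentially the paper's proof. The paper's argument consists precisely of rewriting the Case~2 relation as the Case~1 relation with $\varepsilon$ replaced by $-\varepsilon$, so $C$ is unchanged and the terms linear in $\varepsilon$ flip sign; your explicit coefficient matching, giving $\beta_t=f_1B^-\sigma_1(t)/A^-$ and $\gamma_t=\varepsilon f_1\sigma_1(t)C^-/A^-$, also checks out, and your caveat that dividing by $A^-$ is implicitly assumed matches the paper's a posteriori bound $1-\varepsilon G^-_\xi\in[1-\varepsilon,1]$ in Corollary~\ref{cor:implicit_margrabe_minus} (your delta bound yields $A^-\geq 1-\varepsilon$ rather than a strict inequality).
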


\begin{proof}
The Case~2 feedback relation can be written as
\[
F^{\mathrm{obs}}_{t,T}
=
F^{\mathrm{lic}}_{t,T}
-
(-\varepsilon)
V^-\!\left(t,F^{\mathrm{obs}}_{t,T},F^{\mathrm{obs}}_{t,T_2}\right).
\]
Hence the derivation of Proposition~\ref{prop:roll_pde} applies with
\(V^+\) replaced by \(V^-\) and \(\varepsilon\) replaced by
\(-\varepsilon\). This changes the sign of every term containing a single
factor of \(\varepsilon\), yielding \eqref{eq:ABC_minus} and
\eqref{eq:JK_minus}. The terminal condition \eqref{eq:vminus_terminal} follows from the Case~2 payoff
in \eqref{eq:r_terminal_val}.
\end{proof}
We note that the feedback PDEs \eqref{eq:pde_vplus} and \eqref{eq:pde_vminus} are quasilinear, in that the nonlinearities from the diffusion multiplier come as functions of $(V^\pm,V^\pm_{f_1},V^\pm_{f_2})$, but not of the highest second-order derivatives $V^\pm_{f_if_j}$. 

\subsection{Margrabe dimension reduction and explicit formulas}
When there is no distortion, that is $\varepsilon=0$, we have
\[ A^\pm=B^\pm=J^\pm=1, \qquad K^\pm=\rho, \]
and the $V^\pm$ reduce to the standard calendar exchange option pricing functions  under a bivariate lognormal model: $V^\pm\equiv V^{(0)\pm}$, where $V^{(0)\pm}(t,f_1,f_2)$ solve the linear PDE  and terminal conditions 
\begin{align}
      &V^{(0)\pm}_t
    + \tfrac{1}{2}\,\sigma_1^2(t)\,f_1^2\,V^{(0)\pm}_{f_1f_1}
    + \rho\sigma_1(t)\sigma_2(t)\,f_1 f_2\,V^{(0)\pm}_{f_1f_2}
    + \tfrac{1}{2}\,\sigma_2^2(t)\,f_2^2\,V^{(0)\pm}_{f_2f_2}
    = 0,\label{margrabePDE}\\
    & V^{(0)+}(T,f_1,f_2) = (Df_2 - f_1)^+, \qquad V^{(0)-}(T,f_1,f_2) = (f_1-Df_2)^+. \nonumber
\end{align}

From this, one can show that $V^{(0)\pm}$ are homogeneous of degree one in \((f_1,f_2)\), that is 
$$V^{(0)\pm}(t,\lambda f_1,\lambda f_2)\equiv \lambda V^{(0)\pm}(t,f_1,f_2), \quad \mbox{for all } \lambda>0.$$
Consequently, we have the following explicit formulas. 
\begin{lemma}\label{Marg_lemma}
The exchange option pricing functions $V^{(0)\pm}$, solutions of the PDE problems \eqref{margrabePDE}, admit the representation
\begin{equation*}
    V^{(0)+}(t,f_1,f_2)=f_2P(t,\xi),\quad  V^{(0)-}(t,f_1,f_2)=f_2C(t,\xi),
\qquad
\xi:=f_1/f_2, 
\end{equation*}
where the put price $P$ solves the one-dimensional linear PDE 
\begin{equation}
P_t + \frac12\sigma_M^2(t)\xi^2P_{\xi\xi}
=0,\qquad t<T, \, \xi>0, \label{MargPDE}
\end{equation}
with terminal condition $P(T,\xi)=\bigl(D-\xi\bigr)^+$, and where
\begin{equation}
    \label{eq:sigmaM_def}
\sigma_M^2(t)=\sigma_1^2(t)+\sigma_2^2(t)-2\rho\,\sigma_1(t)\sigma_2(t).
\end{equation} 
The call option price function $C$ solves the same PDE \eqref{MargPDE} as $P$, but with terminal condition $C(T,\xi)=\bigl(\xi-D\bigr)^+$. 
The Black-Scholes-Margrabe put price function is given explicitly:
\begin{equation}
  P(t,\xi) = D\,N\!\left(-d_2(t,\xi)\right) - \xi\,N\!\left(-d_1(t,\xi)\right), \label{Pformula}
\end{equation}
while the corresponding call price function is 
\begin{equation}    
  C(t,\xi) = \xi\,N\!\left(d_1(t,\xi)\right) - D\,N\!\left(d_2(t,\xi)\right),  \label{Cformula}
\end{equation}
where 
\begin{equation}
d_{1,2}(t,\xi)
=
\frac{\log(\xi/D)\pm\frac12\Sigma_M^2(t,T)
}{
\Sigma_M(t,T)
},
\qquad
\Sigma_M(t,T):= \sqrt{\int_t^T \sigma_M^2(s)\,ds},
\label{eq:d1d2}
\end{equation}
and $N$ is the standard normal c.d.f. 
\end{lemma}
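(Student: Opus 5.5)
Let $\xi=f_1/f_2$. The plan is to substitute the ansatz $V^{(0)+}(t,f_1,f_2)=f_2P(t,\xi)$ directly into \eqref{margrabePDE}, check that it turns into \eqref{MargPDE}, and then solve \eqref{MargPDE} in closed form. Uniqueness of the solution of \eqref{margrabePDE} with the given terminal data, in the class of functions of at most linear growth, will then identify $V^{(0)+}$ with $f_2P$. The same steps apply verbatim to $V^{(0)-}=f_2C$. The ansatz is the natural one because the PDE operator and the terminal payoff are both invariant under $(f_1,f_2)\mapsto(\lambda f_1,\lambda f_2)$, which is the homogeneity already noted before the lemma. If one prefers, the degree-one homogeneity can be used directly: setting $\lambda=1/f_2$ gives $V^{(0)+}(t,f_1,f_2)=f_2V^{(0)+}(t,\xi,1)$, and we define $P(t,\xi):=V^{(0)+}(t,\xi,1)$.

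\emph{Reduction.} With $V=f_2P(t,f_1/f_2)$ the chain rule gives
\[
V_{f_1f_1}=\frac{P_{\xi\xi}}{f_2},\qquad
V_{f_1f_2}=-\frac{\xi}{f_2}P_{\xi\xi},\qquad
V_{f_2f_2}=\frac{\xi^2}{f_2}P_{\xi\xi},
\]
where the first-order terms in the $f_2$-derivatives cancel. Substituting into \eqref{margrabePDE} and dividing by $f_2$ leaves
\[
P_t+\tfrac12\xi^2P_{\xi\xi}\bigl(\sigma_1^2-2\rho\sigma_1\sigma_2+\sigma_2^2\bigr)=0,
\]
which is \eqref{MargPDE} with $\sigma_M$ as in \eqref{eq:sigmaM_def}. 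The terminal condition becomes $f_2P(T,\xi)=(Df_2-f_1)^+$, that is, $P(T,\xi)=(D-\xi)^+$, because $f_2>0$. For the call, $(f_1-Df_2)^+=f_2(\xi-D)^+$.

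\emph{Explicit solution.} Equation \eqref{MargPDE} is the zero-rate Black--Scholes equation for the driftless process $d\xi_s=\sigma_M(s)\xi_s\,dZ_s$. By Feynman--Kac, $P(t,\xi)=\mathbb{E}[(D-\xi_T)^+\mid\xi_t=\xi]$, where
\[
\log\xi_T\sim\mathcal N\bigl(\log\xi-\tfrac12\Sigma_M^2(t,T),\,\Sigma_M^2(t,T)\bigr).
\]
The standard lognormal computation then gives \eqref{Pformula} with $d_{1,2}$ as in \eqref{eq:d1d2}. Formula \eqref{Cformula} follows in the same way, or from put--call parity $C-P=\xi-D$, which holds because $\xi$ is a martingale. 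Alternatively, one can verify directly that \eqref{Pformula} satisfies \eqref{MargPDE}, using the identity $\xi N'(d_1)=DN'(d_2)$, and that $P\to(D-\xi)^+$ as $t\uparrow T$.

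\emph{Main obstacle.} The only delicate point is the identification step: showing that the solution of \eqref{margrabePDE} is the homogeneous one. For that I would invoke the Feynman--Kac representation. Since $V^{(0)+}$ is the conditional expectation of the payoff under the bivariate GBM with linear-growth payoff, the solution is unique in the linear-growth class, and the function $f_2P(t,f_1/f_2)$ has linear growth and solves the same problem. The time-dependent volatilities cause no difficulty, since they enter only through the integrated variance $\Sigma_M^2(t,T)$, which is finite for $t<T$ and tends to $0$ as $t\to T$.
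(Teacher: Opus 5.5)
Your proposal is correct and takes the same route as the paper. The paper uses the degree-one homogeneity noted just before the lemma and then simply cites Margrabe (1978), with the integrated variance $\Sigma_M^2(t,T)$ handling the time-dependent volatilities; your chain-rule reduction, Feynman--Kac evaluation and uniqueness remark spell out the standard derivation that this citation compresses.
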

\begin{proof}
The formulas follow from the exchange option formula of \citet{Margrabe1978}, with deterministic time-dependent volatilities incorporated through the integrated variance
$\Sigma_M^2(t,T)$.
\end{proof}

\subsection{Dimension reduction for nonlinear feedback PDEs}\label{sec:reduction}
We observe that the diffusion multipliers $J^+$ and $K^+$ in the PDE \eqref{eq:pde_vplus} are functions of $B^+/A^+$ and $C^+/A^+$. Under the scaling
\[ V^+\mapsto\lambda^{-1}V^+, \quad f_1\mapsto\lambda^{-1}f_1, \quad  f_2\mapsto\lambda^{-1}f_2,  \]
we have 
\[ \frac{B^+}{A^+} = \frac{f_1+\varepsilon V^+}{f_1+\varepsilon f_1V^+_{f_1}}, \qquad
\frac{C^+}{A^+} = \frac{f_2\sigma_2(t)V^+_{f_2}}{\sigma_1(t)(f_1+\varepsilon f_1V^+_{f_1})},
\]
both independent of $\lambda>0$. 

By analogy with the Margrabe dimension-reduction for the case $\varepsilon=0$ in Lemma \ref{Marg_lemma}, we look for a solution for $V^+$ of the form 
$V^+(t,f_1,f_2)=f_2G^+(t,\xi)$, 
where $G^+$ depends on $(f_1,f_2)$ through the calendar ratio $\xi:=f_1/f_2$. Remarkably, the dimension reduction is possible even for the nonlinear PDE with feedback $\varepsilon>0$.

\begin{proposition}[Case 1 dimension reduction]
\label{prop:long_dimension_reduction}
Let \(V^+(t,f_1,f_2)\) be a classical solution of the Case~1 PDE
\eqref{eq:pde_vplus}, with terminal condition \eqref{terminalplus}.
Then we have
\begin{equation}
    V^+(t,f_1,f_2)=f_2G^+(t,\xi),
\qquad
\xi:=f_1/f_2, 
\label{Vplusreduc}
\end{equation}

where the reduced function \(G^+\) solves the one-dimensional nonlinear PDE
\begin{equation}
\label{eq:Gplus_reduced}
G^+_t
+
\frac12
\left(\frac{B^+}{A^+}\right)^2
\sigma_M^2(t)\xi^2G^+_{\xi\xi}
=0,\qquad t<T, \xi>0, 
\end{equation}
with terminal condition
\begin{equation}
\label{eq:Gplus_terminal}
G^+(T,\xi)=\bigl(D-\xi\bigr)^+,
\end{equation}
and where
\begin{equation}
\label{eq:ABplus_reduced}
A^+:=1+\varepsilon G^+_\xi,
\qquad
B^+:=1+\varepsilon\frac{G^+}{\xi}.
\end{equation}
\end{proposition}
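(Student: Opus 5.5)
The plan is to substitute the ansatz \eqref{Vplusreduc} into \eqref{eq:pde_vplus} and check that the PDE collapses to \eqref{eq:Gplus_reduced}. To justify that $V^+$ must take this form, I would argue via homogeneity and uniqueness. The scaling remark preceding the proposition shows that if $V^+(t,f_1,f_2)$ solves \eqref{eq:pde_vplus}, then so does $\lambda^{-1}V^+(t,\lambda f_1,\lambda f_2)$. This holds because $B^+/A^+$ and $C^+/A^+$ are scale invariant, the second-order operator with $f_if_j$ weights is invariant, and the terminal payoff $(Df_2-f_1)^+$ is homogeneous of degree one. Assuming uniqueness of the classical solution in the relevant class, $V^+$ is therefore degree-one homogeneous, and setting $\lambda=1/f_2$ gives $V^+=f_2G^+(t,f_1/f_2)$ with $G^+(t,\xi):=V^+(t,\xi,1)$. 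Alternatively, and to avoid leaning on an unproved uniqueness result, I would present the statement constructively: define $G^+$ as a solution of \eqref{eq:Gplus_reduced}--\eqref{eq:Gplus_terminal}, then verify that $f_2G^+(t,f_1/f_2)$ solves \eqref{eq:pde_vplus} with \eqref{terminalplus}.

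For the computation, write $V=f_2G$ and $\xi=f_1/f_2$. The derivatives are
\[
V_t=f_2G_t,\quad V_{f_1}=G_\xi,\quad V_{f_2}=G-\xi G_\xi,
\]
\[
V_{f_1f_1}=G_{\xi\xi}/f_2,\quad V_{f_1f_2}=-\xi G_{\xi\xi}/f_2,\quad V_{f_2f_2}=\xi^2G_{\xi\xi}/f_2.
\]
These give $A^+=1+\varepsilon G_\xi$ and $B^+=1+\varepsilon G/\xi$, matching \eqref{eq:ABplus_reduced}, together with $C^+=\xi^{-1}(\sigma_2/\sigma_1)(G-\xi G_\xi)$. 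After dividing by $f_2$, the second-order part becomes
\[
\tfrac12\xi^2G_{\xi\xi}\bigl(J^+\sigma_1^2-2K^+\sigma_1\sigma_2+\sigma_2^2\bigr).
\]

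The main step, and the expected obstacle, is the algebraic identity
\[
J^+\sigma_1^2-2K^+\sigma_1\sigma_2+\sigma_2^2=(B^+/A^+)^2\sigma_M^2.
\]
To prove it, multiply through by $(A^+)^2$ and write $c:=\varepsilon C^+\sigma_1/\sigma_2$, so that $c=\varepsilon(G-\xi G_\xi)/\xi$. Then
\[
(A^+)^2\bigl(J^+\sigma_1^2-2K^+\sigma_1\sigma_2+\sigma_2^2\bigr)=(B^+\sigma_1-c\sigma_2\rho)^2\ \text{-type terms}.
\]
More precisely, the left side equals $(B\sigma_1)^2+c^2\sigma_2^2-2\rho Bc\sigma_1\sigma_2-2A\sigma_1\sigma_2(\rho B-c)+A^2\sigma_2^2$. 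The key observation is that $B^+-A^+=\varepsilon(G/\xi-G_\xi)=c$, that is, $A=B-c$. Substituting $A=B-c$ and expanding should cancel every term involving $c$, leaving $B^2(\sigma_1^2-2\rho\sigma_1\sigma_2+\sigma_2^2)=B^2\sigma_M^2$. Dividing by $A^2$ then yields \eqref{eq:Gplus_reduced}.

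The terminal condition \eqref{eq:Gplus_terminal} follows immediately from $(Df_2-f_1)^+=f_2(D-\xi)^+$ for $f_2>0$. Finally, I would note that $A^+\neq0$ is needed throughout; this is already implicit in Proposition~\ref{prop:roll_pde}.
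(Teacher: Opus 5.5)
Your approach is the paper's: its proof is exactly the substitution of $V^+=f_2G^+(t,f_1/f_2)$ using the same derivative relations you list. Your observation $B^+-A^+=\varepsilon(G^+/\xi-G^+_\xi)=c$ is precisely what makes the second-order coefficient collapse to $(B^+/A^+)^2\sigma_M^2$, a step the paper leaves implicit.

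One slip needs fixing. Since $\varepsilon C^+=c\,\sigma_2/\sigma_1$, the cross term is $-2A^+\sigma_1\sigma_2(\rho B^+-\varepsilon C^+)=-2\rho A^+B^+\sigma_1\sigma_2+2A^+c\,\sigma_2^2$, not $-2A^+\sigma_1\sigma_2(\rho B^+-c)$. With your version, substituting $A^+=B^+-c$ leaves a residual $2A^+c\,\sigma_2(\sigma_1-\sigma_2)$, so the cancellation fails unless $\sigma_1=\sigma_2$. With the correct term, the left side is the quadratic form $\sigma_1^2(B^+)^2-2\rho\sigma_1\sigma_2B^+(A^++c)+\sigma_2^2(A^++c)^2$, which equals $(B^+)^2\sigma_M^2$ because $A^++c=B^+$.

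Your remark that ``every classical solution has this form'' requires homogeneity plus uniqueness (or else a constructive reading) is well taken. The paper's proof only verifies the ansatz, and uniqueness is supplied later in Corollary~\ref{cor:case1_uniqueness}.
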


\begin{proof}
The result follows from inserting the form \eqref{Vplusreduc} into the PDE \eqref{eq:pde_vplus} and terminal condition \eqref{terminalplus} for $V^+$ using the relations
\[
V^+_{f_1}=G^+_\xi,
\quad
V^+_{f_2}=G^+-\xi G^+_\xi,\quad
f_2V^+_{f_1f_1}
=G^+_{\xi\xi},
\quad
f_2V^+_{f_1f_2}
=
-\xi G^+_{\xi\xi},
\quad
f_2V^+_{f_2f_2}
=\xi^2G^+_{\xi\xi}.
\]
\end{proof}
\begin{corollary}[Case 2 dimension reduction]
\label{cor:short_dimension_reduction}
Let \(V^-(t,f_1,f_2)\) be a classical solution of the Case~2 PDE
\eqref{eq:pde_vminus}, with terminal condition \eqref{eq:vminus_terminal}.
Then $V^-(t,f_1,f_2)=f_2G^-(t,\xi)$, where $\xi=f_1/f_2$,
and \(G^-\) solves
\begin{equation}
\label{eq:Gminus_reduced}
G^-_t
+
\frac12\left(\frac{B^-}{A^-}\right)^2
\sigma_M^2(t)\xi^2G^-_{\xi\xi}
=0,
\end{equation}
with terminal condition 
\begin{equation}
    G^-(T,\xi)=\bigl(\xi-D\bigr)^+,\label{eq:Gminus_terminal}
\end{equation}
and where
\[
A^-:=1-\varepsilon G^-_\xi,
\qquad
B^-:=1-\varepsilon\frac{G^-}{\xi}.
\]
The Margrabe variance $\sigma_M^2(t)$ is given in \eqref{eq:sigmaM_def}.
\end{corollary}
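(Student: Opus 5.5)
The plan is to mirror the proof of Proposition~\ref{prop:long_dimension_reduction}: substitute the ansatz $V^-(t,f_1,f_2)=f_2G^-(t,\xi)$, with $\xi=f_1/f_2$, into the Case~2 PDE \eqref{eq:pde_vminus} of Corollary~\ref{cor:short_side_pde}. I would use the same chain-rule identities as before:
\[
V^-_{f_1}=G^-_\xi,\quad V^-_{f_2}=G^--\xi G^-_\xi,\quad f_2V^-_{f_1f_1}=G^-_{\xi\xi},\quad f_2V^-_{f_1f_2}=-\xi G^-_{\xi\xi},\quad f_2V^-_{f_2f_2}=\xi^2G^-_{\xi\xi}.
\]
With these, the coefficients in \eqref{eq:ABC_minus} become functions of $(t,\xi)$ alone. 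Indeed, $A^-=1-\varepsilon G^-_\xi$ and $B^-=1-\varepsilon G^-/\xi$, which are exactly the quantities in the statement. The third coefficient is $C^-=\xi^{-1}(\sigma_2/\sigma_1)(G^--\xi G^-_\xi)$. The scaling observation of Section~\ref{sec:reduction} carries over, since $J^-$ and $K^-$ depend only on $B^-/A^-$ and $C^-/A^-$. This is what makes the ansatz consistent.

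Substituting and dividing by $f_2$, the PDE becomes $G^-_t+\tfrac12\,\xi^2G^-_{\xi\xi}\,\mathcal{Q}=0$, where
\[
\mathcal{Q}:=J^-\sigma_1^2-2K^-\sigma_1\sigma_2+\sigma_2^2 .
\]
The main step is to show that $\mathcal{Q}=(B^-/A^-)^2\sigma_M^2$. I would use the algebraic identity $B^--A^-=\varepsilon(G^-_\xi-G^-/\xi)$. Comparing with the expression for $C^-$ gives $\varepsilon\sigma_1C^-=-\sigma_2(B^--A^-)$, so $\varepsilon\sigma_1C^-=\sigma_2(A^--B^-)$.

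Write $a=A^-$, $b=B^-$ and $c=\varepsilon\sigma_1C^-=\sigma_2(a-b)$. Multiplying $\mathcal{Q}$ by $a^2$ gives
\[
a^2\mathcal{Q}=\sigma_1^2b^2+c^2+2\rho\sigma_1bc-2a\sigma_2(\rho\sigma_1b+c)+a^2\sigma_2^2 .
\]
Substituting $c=\sigma_2(a-b)$, the terms $c^2-2a\sigma_2c+a^2\sigma_2^2$ combine into $(a\sigma_2-c)^2=b^2\sigma_2^2$. The $\rho$-terms combine into
\[
2\rho\sigma_1b\,\sigma_2(a-b)-2\rho a\sigma_1\sigma_2b=-2\rho\sigma_1\sigma_2b^2 .
\]
Hence $a^2\mathcal{Q}=b^2(\sigma_1^2+\sigma_2^2-2\rho\sigma_1\sigma_2)=b^2\sigma_M^2$, which yields \eqref{eq:Gminus_reduced}. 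I expect this cancellation to be the only real obstacle. Sign bookkeeping matters here: the Case~2 coefficients carry $+\varepsilon C^-$ where Case~1 has $-\varepsilon C^+$, and this is offset by the reversed sign in $A^--B^-$. As a cross-check, the same computation applied in Case~1 reproduces \eqref{eq:Gplus_reduced}.

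It remains to handle the terminal condition. From \eqref{eq:vminus_terminal}, $f_2G^-(T,\xi)=(f_1-Df_2)^+=f_2(\xi-D)^+$, which gives \eqref{eq:Gminus_terminal}. The representation $V^-=f_2G^-$ itself follows exactly as in Proposition~\ref{prop:long_dimension_reduction}. Alternatively, since the Case~2 relation is the Case~1 relation with $\varepsilon\mapsto-\varepsilon$ (as in the proof of Corollary~\ref{cor:short_side_pde}), one can invoke the Case~1 reduction directly under that substitution. Then $B^+/A^+$ maps to $B^-/A^-$, and the only change needed is the call payoff in place of the put payoff.
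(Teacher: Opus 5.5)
Your proposal is correct and follows the paper's route. You substitute $V^-=f_2G^-(t,f_1/f_2)$ into \eqref{eq:pde_vminus} with the same chain-rule identities as in Proposition~\ref{prop:long_dimension_reduction}, and your identity $\varepsilon\sigma_1C^-=\sigma_2(A^--B^-)$, with the resulting cancellation $(A^-)^2\bigl(J^-\sigma_1^2-2K^-\sigma_1\sigma_2+\sigma_2^2\bigr)=(B^-)^2\sigma_M^2$, supplies exactly the algebra the paper leaves implicit. Like the paper, you treat $f_2G^-$ as an ansatz justified by scaling invariance, rather than proving that an arbitrary classical solution must be homogeneous of degree one; that would also need a uniqueness argument of the kind given later in Corollary~\ref{cor:case1_uniqueness}.
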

\begin{proof}
The proof is analogous to the proof of Proposition \ref{prop:long_dimension_reduction}.
\end{proof}
\remark{While Case 1 has a rollover option whose payoff is positive in the case of contango between the front and next month futures prices, the valuation of $V^+$ is related to the feedback valuation of a {\em put option} on the calendar spread \eqref{eq:Gplus_terminal}. Conversely, Case 2 has a rollover option with positive payoff in backwardation, and involves feedback valuation of a {\em call option} on the calendar spread \eqref{eq:Gminus_terminal}.}

\subsection{Explicit Solutions}\label{sec:explicit}
We can solve for $G^\pm$, and hence $V^\pm$, explicitly, up to the unique solution of a scalar equation.
\begin{proposition}[Case 1]\label{prop_explicit1}
Let \(G^+\) solve the reduced Case~1 feedback PDE \eqref{eq:Gplus_reduced} with terminal condition \eqref{eq:Gplus_terminal}. Then \(G^+(t,\xi)\) is given by the unique solution of the scalar equation
\begin{equation}\label{Gplusexplicit}
    G^+(t,\xi) = \frac{1}{1-\varepsilon}
P\!\left(t,\xi+\varepsilon G^+(t,\xi)\right),
\end{equation}
where \(P(t,\xi)\) is the standard Black-Scholes-Margrabe put price function given explicitly in \eqref{Pformula}.
\end{proposition}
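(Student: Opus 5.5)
The plan is a hodograph-type change of the spatial variable that turns the nonlinear reduced PDE \eqref{eq:Gplus_reduced} into the linear Margrabe PDE \eqref{MargPDE}. Given a solution $G^+$, write $G=G^+$, $A=A^+$, $B=B^+$, and introduce the shifted variable
\[
\eta(t,\xi):=\xi+\varepsilon G(t,\xi).
\]
Then $\eta_\xi=1+\varepsilon G_\xi=A$ and $B=\eta/\xi$. I look for $G$ in the form $G(t,\xi)=u(t,\eta(t,\xi))$ and aim to show that $u=P/(1-\varepsilon)$, which is exactly the claimed scalar equation \eqref{Gplusexplicit}.

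\textbf{Step 1 (transforming the derivatives).} Differentiating $G=u(t,\xi+\varepsilon G)$ implicitly gives
\[
G_\xi=\frac{u_\eta}{1-\varepsilon u_\eta},\qquad A=\frac{1}{1-\varepsilon u_\eta},\qquad G_t=A\,u_t .
\]
Differentiating once more gives $G_{\xi\xi}(1-\varepsilon u_\eta)=u_{\eta\eta}A^2$, that is, $G_{\xi\xi}=A^3u_{\eta\eta}$. Substituting these and $B=\eta/\xi$ into \eqref{eq:Gplus_reduced}, the left-hand side becomes
\[
A\Bigl(u_t+\tfrac12\sigma_M^2(t)\,\eta^2u_{\eta\eta}\Bigr).
\]
So, whenever $A\neq0$, $G$ solves the feedback PDE if and only if $u$ solves the linear PDE \eqref{MargPDE} in the variable $\eta$.

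\textbf{Step 2 (terminal condition).} I check that $u(T,\eta)=(D-\eta)^+/(1-\varepsilon)$ is consistent with \eqref{eq:Gplus_terminal}. If $\xi<D$, then $\eta=\xi+\varepsilon(D-\xi)$, so $D-\eta=(1-\varepsilon)(D-\xi)>0$. If $\xi\ge D$, then $G=0$ and $\eta=\xi$, and both sides vanish. By Lemma~\ref{Marg_lemma} and uniqueness for the linear Black--Scholes problem in the class of at most linearly growing functions, $u=P/(1-\varepsilon)$. Since $G\ge0$, we also have $\eta\ge\xi>0$, so $P$ is evaluated on its natural domain.

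\textbf{Step 3 (unique solvability of the scalar equation).} For fixed $(t,\xi)$, set
\[
h(g):=g-\frac{1}{1-\varepsilon}\,P(t,\xi+\varepsilon g),\qquad g\ge0 .
\]
Because $-1<P_\xi<0$, we get $h'(g)=1-\varepsilon P_\xi/(1-\varepsilon)>1$. Moreover $h(0)<0$ for $t<T$ and $h(g)\to\infty$ as $g\to\infty$, so $h$ has a unique root $g^\ast(t,\xi)$. The implicit function theorem then makes $g^\ast$ smooth in $(t,\xi)$. In addition, $u_\eta=P_\eta/(1-\varepsilon)<0$, so $A=1/(1-\varepsilon u_\eta)\in(0,1)$. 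Hence $\xi\mapsto\eta$ is strictly increasing and the transformation in Step 1 is legitimate. Running Step 1 backwards shows that $g^\ast$ solves \eqref{eq:Gplus_reduced}--\eqref{eq:Gplus_terminal}.

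\textbf{Main obstacle.} The hard part is the direction ``any classical solution $G^+$ equals $g^\ast$'', i.e.\ uniqueness for the nonlinear problem. I would handle it by reversing the argument. Assuming $A^+>0$ (the natural nondegeneracy condition, needed for the PDE to be parabolic), the map $\xi\mapsto\eta$ is a diffeomorphism onto its image. Define $u$ through its inverse; by Step 1, $u$ solves the linear Margrabe PDE with the put terminal data and has linear growth. Uniqueness for that linear problem forces $u=P/(1-\varepsilon)$, and Step 3 then identifies $G^+=g^\ast$. The delicate point is confirming that the image of $\xi\mapsto\eta$ is all of $(0,\infty)$, and that growth bounds transfer through the change of variables. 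For this I would use $\eta\ge\xi$ together with $G^+\le D/(1-\varepsilon)$, which follows from bounds on $P$.
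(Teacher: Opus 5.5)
Your Steps 1--3 reproduce the paper's proof: the same substitution $y=\xi+\varepsilon G^+$, the same identities $G^+_t=A\,u_t$ and $G^+_{\xi\xi}=A^3u_{\eta\eta}$, the same rescaled terminal data $(D-\eta)^+/(1-\varepsilon)$, and the same monotone scalar map (your $h$ is the paper's $\Phi$ divided by $1-\varepsilon$). Your intermediate-value existence argument, followed by running Step~1 backwards, is a little cleaner than the paper's ``existence follows from the construction''.

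The gap is in the direction you flag yourself, and your proposed fix fails at a specific point. Over $\xi>0$, the image of $\xi\mapsto\eta$ is not $(0,\infty)$. Already at $t=T$, $\eta=\xi+\varepsilon(D-\xi)^+$ maps $(0,\infty)$ onto $(\varepsilon D,\infty)$. For $t<T$ the image is $(\varepsilon G^+(t,0^+),\infty)$, and for the explicit solution $G^+(t,0)>0$; there $\eta\downarrow0$ only as $\xi\downarrow-\varepsilon D/(1-\varepsilon)$, cf.\ \eqref{eq:admissible_case1}. The bounds $\eta\ge\xi$ and $G^+\le D/(1-\varepsilon)$ only place the left endpoint in $[0,\varepsilon D/(1-\varepsilon)]$; they cannot push it to zero. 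Hence $u$ solves the Margrabe equation only on a moving domain $\{\eta>\eta_0(t)\}$, with terminal data known only for $\eta>\varepsilon D$. Uniqueness for the Cauchy problem on $(0,\infty)$ therefore cannot be invoked. The same objection undercuts the identification $u=P/(1-\varepsilon)$ in your Step~2 for an arbitrary solution.

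Sharper estimates will not repair this. The reduced PDE is nondegenerate at $\xi=0$: its coefficient there is $\tfrac12\sigma_M^2\varepsilon^2G^+(t,0)^2/A^2>0$ (Section~\ref{sec:where}). So the terminal-value problem on $\xi>0$, with no condition at $\xi=0$, does not pin down the solution. For example, take another smooth curve $\eta_0(t)$ with $\eta_0(T)=\varepsilon D$. Solve the linear Margrabe equation on $\{\eta>\eta_0(t)\}$ with terminal data $(D-\eta)^+/(1-\varepsilon)$ and Dirichlet data $u=\eta_0/\varepsilon$ on the curve, then map back by $\xi=\eta-\varepsilon u$. As long as $1-\varepsilon u_\eta>0$, this gives another solution on $\xi>0$ with the correct terminal condition.

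The paper's proof of this proposition passes over the same point when it goes from the terminal condition straight to $H=P/(1-\varepsilon)$. Uniqueness is actually supplied by Corollary~\ref{cor:case1_uniqueness}, which works in the licensed coordinate $F^{\mathrm{lic}}_{t,T}/F^{\mathrm{obs}}_{t,T_2}$ (the paper's $y$); this coordinate does range over all of $(0,\infty)$. The argument runs as follows:
\begin{itemize}
\item solve the terminal feedback identity pointwise to get $V^+_T=(DF^{\mathrm{obs}}_{T,T_2}-F^{\mathrm{lic}}_{T,T})^+/(1-\varepsilon)$;
\item take conditional expectations under the bivariate GBM to obtain $V^+_t=f_2P(t,F^{\mathrm{lic}}_{t,T}/f_2)/(1-\varepsilon)$;
\item substitute $F^{\mathrm{lic}}_{t,T}=f_1+\varepsilon v$ and conclude with your Step~3 monotonicity.
\end{itemize}
That characterization is the missing ingredient. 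The PDE alternative is to pose the problem on the extended domain $\xi>-\varepsilon D/(1-\varepsilon)$ and require $\eta\downarrow0$ at its left end.
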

\begin{proof}
The PDE \eqref{eq:Gplus_reduced} for $G^+$ can be written  
\[
G_t^+ + \frac12 \sigma_M^2(t)\left(\frac{\xi+\varepsilon G^+}
{1+\varepsilon G_\xi^+}\right)^2G_{\xi\xi}^+=0.
\]
Assume {\em a priori} that $1+\varepsilon G_\xi^+(t,\xi)>0$, so that the 
map $\xi\mapsto y:=\xi+\varepsilon G^+(t,\xi)$ is locally invertible, and define the change of variable  
\begin{equation}
    G^+(t,\xi)=H(t,y), \qquad y(t,\xi):=\xi+\varepsilon G^+(t,\xi). \label{ydef}
\end{equation}
Differentiating \(H(t,y)=G^+(t,\xi)\) gives
\[
H_y=\frac{G_\xi^+}{1+\varepsilon G_\xi^+},
\quad H_{yy}=\frac{G_{\xi\xi}^+}
{\left(1+\varepsilon G_\xi^+\right)^3}, \quad 
\xi_t=-\frac{\varepsilon G_t^+}
{1+\varepsilon G_\xi^+}.\]
Hence we obtain
\[
G_t^+
=
\left(1+\varepsilon G_\xi^+\right)H_t, \qquad 
G_{\xi\xi}^+=
\left(1+\varepsilon G_\xi^+\right)^3H_{yy}.
\]
Substitution into the reduced feedback PDE gives
\[
\left(1+\varepsilon G_\xi^+\right)
\left[
H_t
+
\frac12\sigma_M^2(t)y^2H_{yy}
\right]
=0.
\]
Since \(1+\varepsilon G_\xi^+>0\), $H$ solves the Margrabe PDE \eqref{MargPDE}.
The terminal condition is 
\[ H(T,y)=
\frac{1}{1-\varepsilon}(D-y)^+, \quad \mbox{which leads to } 
\quad H(t,y) = \frac{1}{1-\varepsilon}P(t,y), 
\]
and therefore to \eqref{Gplusexplicit}.
It remains to verify the assumed invertibility and the uniqueness of the solution to \eqref{Gplusexplicit}.
For the Margrabe put, $P_\xi(t,y)=-N(-d_1(t,y))$, so that $-1\leq P_\xi(t,y)\leq0$. 
Therefore, for $0\leq\varepsilon<1$,
\[
1-\varepsilon
\leq
1+\varepsilon G_\xi^+(t,\xi)
\leq1.
\]
Thus the invertibility condition assumed above indeed holds {\em a posteriori}. 
Finally, for fixed $(t,\xi)$ define
\[
\Phi(g)
:=
(1-\varepsilon)g
-
P(t,\xi+\varepsilon g).
\]
Since
\[
\Phi'(g)
=
1-\varepsilon
-
\varepsilon P_\xi(t,\xi+\varepsilon g)
=
1-\varepsilon\left(
1+P_\xi(t,\xi+\varepsilon g)
\right)
\geq 1-\varepsilon>0,
\]
$\Phi$ is strictly increasing. Hence \eqref{Gplusexplicit} has at most
one solution. Existence follows from the construction above.
\end{proof}
\begin{corollary}[Case 2 scalar representation]\label{cor:implicit_margrabe_minus}
For $0\leq\varepsilon<1$ and $\xi>0$, the Case~2 feedback value is determined by
\begin{equation}\label{eq:Gminus_explicit}
(1-\varepsilon)G^-(t,\xi)=C\!\left(t,\xi-\varepsilon G^-(t,\xi)\right),
\end{equation}
where $C$ is given by~\eqref{Cformula}. The solution is unique among values for which $y=\xi-\varepsilon G^-(t,\xi)>0$, and $V^-(t,f_1,f_2)=f_2G^-(t,f_1/f_2)$.
\end{corollary}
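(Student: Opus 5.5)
We follow the Case~1 argument of Proposition~\ref{prop_explicit1}, replacing $\varepsilon$ by $-\varepsilon$ as in Corollary~\ref{cor:short_side_pde}. By Corollary~\ref{cor:short_dimension_reduction}, the reduced equation \eqref{eq:Gminus_reduced} can be written as
\[
G^-_t+\tfrac12\sigma_M^2(t)\left(\frac{\xi-\varepsilon G^-}{1-\varepsilon G^-_\xi}\right)^2G^-_{\xi\xi}=0 .
\]
Assume a priori that $1-\varepsilon G^-_\xi>0$, so that $y:=\xi-\varepsilon G^-(t,\xi)$ is locally invertible in $\xi$, and set $G^-(t,\xi)=H(t,y)$. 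The same implicit differentiation as before gives
\[
H_y=\frac{G^-_\xi}{1-\varepsilon G^-_\xi},\qquad
H_{yy}=\frac{G^-_{\xi\xi}}{(1-\varepsilon G^-_\xi)^3},\qquad
G^-_t=(1-\varepsilon G^-_\xi)H_t .
\]
Substituting these into the PDE yields $(1-\varepsilon G^-_\xi)\bigl[H_t+\tfrac12\sigma_M^2 y^2H_{yy}\bigr]=0$. Hence $H$ solves the Margrabe PDE \eqref{MargPDE} in $y$.

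For the terminal condition, at $t=T$ we have $G^-=(\xi-D)^+$. Where $\xi>D$, this gives $y=\xi-\varepsilon(\xi-D)=(1-\varepsilon)\xi+\varepsilon D>D$, so $y-D=(1-\varepsilon)(\xi-D)$. Where $\xi\le D$, we have $y=\xi\le D$. Therefore $H(T,y)=(y-D)^+/(1-\varepsilon)$. By Lemma~\ref{Marg_lemma}, $H=C/(1-\varepsilon)$, which is exactly \eqref{eq:Gminus_explicit}. Then $V^-=f_2G^-(t,f_1/f_2)$ follows from Corollary~\ref{cor:short_dimension_reduction}.

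To check the a priori condition, use $C_y=N(d_1)\in[0,1]$. Differentiating \eqref{eq:Gminus_explicit} gives
\[
G^-_\xi=\frac{C_y}{1-\varepsilon+\varepsilon C_y},
\qquad\text{so}\qquad
1-\varepsilon G^-_\xi=\frac{1-\varepsilon}{1-\varepsilon+\varepsilon C_y}\in[1-\varepsilon,1],
\]
which is positive. This holds as long as $y>0$, which is needed so that $C(t,y)$ and $d_1$ are defined.

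The main obstacle is uniqueness, because the monotonicity argument of Case~1 fails here. For fixed $(t,\xi)$ define $\Psi(g):=(1-\varepsilon)g-C(t,\xi-\varepsilon g)$ on the domain $g<\xi/\varepsilon$, which is the region $y>0$. Then
\[
\Psi'(g)=1-\varepsilon+\varepsilon C_y\ge 1-\varepsilon>0 ,
\]
so $\Psi$ is strictly increasing and has at most one root. This is why the statement claims uniqueness only among values with $y>0$. For existence, note $\Psi(0)=-C(t,\xi)\le0$. As $g\uparrow\xi/\varepsilon$ we have $y\downarrow0$ and $C(t,y)\to0$, so $\Psi\to(1-\varepsilon)\xi/\varepsilon>0$. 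Continuity then gives a root in $[0,\xi/\varepsilon)$. This settles existence directly, without relying on the PDE construction.
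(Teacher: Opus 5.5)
Your proof is correct and takes essentially the same route as the paper: the same change of variables $y=\xi-\varepsilon G^-$ reduces the problem to the Margrabe equation with terminal value $(y-D)^+/(1-\varepsilon)$, and the same a posteriori bound $1-\varepsilon G^-_\xi\in[1-\varepsilon,1]$ verifies the invertibility assumption. For existence and uniqueness the paper inverts the strictly increasing bijection $y\mapsto y+\frac{\varepsilon}{1-\varepsilon}C(t,y)$ of $(0,\infty)$, which is the same as your monotone-$\Psi$ plus intermediate-value argument after substituting $g=(\xi-y)/\varepsilon$. So the Case~1 monotonicity argument does not actually fail; it only needs the domain restriction $y>0$. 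The paper's parametrization also covers $\varepsilon=0$ without a separate remark.
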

\begin{proof}
As in Case~1, set $G^-(t,\xi)=H(t,y)$, now with $y=\xi-\varepsilon G^-(t,\xi)$. The transformed PDE is the Margrabe equation and its terminal value is $(y-D)^+/(1-\varepsilon)$. Thus $H(t,y)=C(t,y)/(1-\varepsilon)$. To verify invertibility and existence directly, define
\[
\xi=y+\frac{\varepsilon}{1-\varepsilon}C(t,y).
\]
This map is strictly increasing from $(0,\infty)$ onto $(0,\infty)$ because $0\leq C_y\leq1$. It defines a unique $y$ for each $\xi>0$, and hence the stated solution. Moreover,
\[
G^-_\xi=\frac{C_y(t,y)}{1-\varepsilon+\varepsilon C_y(t,y)},\qquad
1-\varepsilon G^-_\xi=
\frac{1-\varepsilon}{1-\varepsilon+\varepsilon C_y(t,y)}
\in[1-\varepsilon,1],
\]
which verifies the nonvanishing denominator in the original PDE. At $T$, the same increasing map gives $G^-(T,\xi)=(\xi-D)^+$.
\end{proof}

The preceding transformations motivate the following direct proof of uniqueness. 

\begin{corollary}[Uniqueness of the Case 1 feedback price]
\label{cor:case1_uniqueness}
The Case~1 roll-option price consistent with the feedback relation
\eqref{eq:long_feedback_relation} and the forward-pricing convention
\eqref{eq:forward_convention} is unique. It is given by
\[
V^+(t,f_1,f_2)
=
f_2G^+\!\left(t,\frac{f_1}{f_2}\right),
\]
where \(G^+\) is the solution in Proposition~\ref{prop_explicit1}.
Consequently, this function is the unique Case~1 feedback pricing
solution of the nonlinear PDE problem
\eqref{eq:pde_vplus}--\eqref{terminalplus}.
\end{corollary}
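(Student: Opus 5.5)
Rather than working with the nonlinear PDE directly, I would argue probabilistically from the feedback relation \eqref{eq:long_feedback_relation} and the forward-pricing convention \eqref{eq:forward_convention}. The key observation is that the feedback structure can be unwound.

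Let $R_t := V^+(t,F^{\text{obs}}_{t,T},F^{\text{obs}}_{t,T_2})$ be any candidate roll-option price process satisfying both relations. Set
\[
Y_t := F^{\text{obs}}_{t,T} + \varepsilon R_t .
\]
By \eqref{eq:long_feedback_relation}, $Y_t = F^{\text{lic}}_{t,T}$, so $Y$ is the exogenous geometric Brownian motion in \eqref{dF2eqn} and does not depend on the unknown $V^+$. At maturity we have $R_T = (DF^{\text{obs}}_{T,T_2}-F^{\text{obs}}_{T,T})^+$. With $f_2 := F^{\text{obs}}_{T,T_2}$, the scalar $r = R_T$ therefore solves
\[
r = \bigl(Df_2 - F^{\text{lic}}_{T,T} + \varepsilon r\bigr)^+ .
\]
The map $r\mapsto (a+\varepsilon r)^+$ is a contraction with constant $\varepsilon<1$, so this equation has exactly one solution. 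Explicitly,
\[
R_T=\frac{1}{1-\varepsilon}\bigl(Df_2-F^{\text{lic}}_{T,T}\bigr)^+ .
\]
This pins down the terminal payoff as a function of the undistorted pair $(F^{\text{lic}}_{T,T},F^{\text{obs}}_{T,T_2})$ alone. The forward convention then forces
\[
R_t=\frac{1}{1-\varepsilon}\,\mathbb{E}^{\mathbb{Q}}\!\left[\bigl(DF^{\text{obs}}_{T,T_2}-F^{\text{lic}}_{T,T}\bigr)^+\,\middle|\,\mathcal{F}_t\right]
=\frac{1}{1-\varepsilon}\,F^{\text{obs}}_{t,T_2}\,P\!\left(t,\frac{F^{\text{lic}}_{t,T}}{F^{\text{obs}}_{t,T_2}}\right),
\]
where the second equality is the Margrabe formula of Lemma~\ref{Marg_lemma} applied to the lognormal pair in \eqref{dF2eqn}. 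So the process $R$ is uniquely determined, with no assumption of smoothness or Markov form.

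The second step converts this back to the function $V^+$. Write $\xi = f_1/f_2$ and $g = R_t/f_2$. Dividing $F^{\text{lic}}_{t,T}=f_1+\varepsilon R_t$ by $f_2$ gives $F^{\text{lic}}_{t,T}/f_2 = \xi+\varepsilon g$, so $g$ satisfies
\[
g = \frac{1}{1-\varepsilon}P(t,\xi+\varepsilon g).
\]
This is exactly \eqref{Gplusexplicit}. The monotonicity argument for $\Phi$ in the proof of Proposition~\ref{prop_explicit1} shows this equation has a unique root $g=G^+(t,\xi)$. Hence $V^+(t,f_1,f_2)=f_2G^+(t,f_1/f_2)$ at every reachable state $(f_1,f_2)$.

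The step needing most care is showing that every point of the domain $\mathbb{R}_+\times\mathbb{R}_+$ is reachable, so that the function, and not merely its values along paths, is pinned down. I would handle this as follows.

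\begin{itemize}
\item The pair $(F^{\text{lic}}_{t,T},F^{\text{obs}}_{t,T_2})$ has full support on $(0,\infty)^2$ for $t>0$.
\item The map $(f_1,f_2)\mapsto(f_1+\varepsilon f_2G^+(t,f_1/f_2),\,f_2)$ is a bijection of the relevant region. It is strictly increasing in $f_1$, since $1+\varepsilon G^+_\xi\ge 1-\varepsilon>0$ by the bound in the proof of Proposition~\ref{prop_explicit1}.
\item Combining these, any value of $V^+$ is determined by conditioning on $(F^{\text{lic}},F^{\text{obs}}_{T_2})$. At $t=0$ one simply starts the conditioning at an arbitrary initial point.
\end{itemize}

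Finally, uniqueness for the PDE problem \eqref{eq:pde_vplus}--\eqref{terminalplus} follows from Proposition~\ref{prop:roll_pde}. Any $\mathcal{C}^{1,2,2}$ solution arising from the feedback model satisfies the feedback relation, so by the above it must coincide with $f_2G^+(t,f_1/f_2)$. Conversely, Proposition~\ref{prop_explicit1} shows that this function does solve the PDE, which gives existence.
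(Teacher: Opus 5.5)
Your proof is correct and follows the paper's argument essentially step for step. You unwind the terminal fixed point to $\frac{1}{1-\varepsilon}(DF^{\mathrm{obs}}_{T,T_2}-F^{\mathrm{lic}}_{T,T})^+$, price it by Margrabe under the forward convention, substitute $F^{\mathrm{lic}}_{t,T}=f_1+\varepsilon v$, and use strict monotonicity of the resulting scalar equation. The paper does the same with $\Psi(v)$ in unscaled units instead of your $\Phi(g)$. Your reachability step is extra rigor the paper skips, and it works once you state that you use the continuity of $V^+$ (from the $\mathcal{C}^{1,2,2}$ hypothesis) to turn almost-sure equality along paths into equality of functions on $(0,\infty)^2$.
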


\begin{proof}
At time \(T\), the terminal payoff in \eqref{eq:r_terminal_val} and the
Case~1 feedback relation \eqref{eq:long_feedback_relation} give
\[
V^+_T
=
\left(
D F^{\mathrm{obs}}_{T,T_2}
-
F^{\mathrm{obs}}_{T,T}
\right)^+
=
\left(
D F^{\mathrm{obs}}_{T,T_2}
-
F^{\mathrm{lic}}_{T,T}
+
\varepsilon V^+_T
\right)^+.
\]
Solving this scalar equation pointwise gives
\begin{equation}
V^+_T
=
\frac{1}{1-\varepsilon}
\left(
D F^{\mathrm{obs}}_{T,T_2}
-
F^{\mathrm{lic}}_{T,T}
\right)^+.
\label{eq:Vplus_terminal_licensed}
\end{equation}
By the forward-pricing convention \eqref{eq:forward_convention}, $V^+_t
=\mathbb{E}^{\mathbb{Q}}\left[V^+_T\mid\mathcal{F}_t\right]$.
The pair
\(
(F^{\mathrm{lic}}_{t,T},F^{\mathrm{obs}}_{t,T_2})
\)
follows the bivariate GBM dynamics in \eqref{dF2eqn}. Therefore,
using \eqref{eq:Vplus_terminal_licensed} and the Margrabe put-price
function \(P\) in \eqref{Pformula},
\[
V^+_t
=
\frac{F^{\mathrm{obs}}_{t,T_2}}{1-\varepsilon}
P\!\left(
t,
\frac{F^{\mathrm{lic}}_{t,T}}
     {F^{\mathrm{obs}}_{t,T_2}}
\right).
\]
Now suppose
\[
F^{\mathrm{obs}}_{t,T}=f_1,
\qquad
F^{\mathrm{obs}}_{t,T_2}=f_2,
\qquad
V^+_t=v.
\]
From \eqref{eq:long_feedback_relation},
\[
F^{\mathrm{lic}}_{t,T}=f_1+\varepsilon v.
\]
It follows that \(v\) must satisfy
\begin{equation}
v
=
\frac{f_2}{1-\varepsilon}
P\!\left(
t,
\frac{f_1+\varepsilon v}{f_2}
\right).
\label{eq:Vplus_implicit_full}
\end{equation}

For fixed \((t,f_1,f_2)\), define
\[
\Psi(v)
:=
v
-
\frac{f_2}{1-\varepsilon}
P\!\left(
t,
\frac{f_1+\varepsilon v}{f_2}
\right).
\]
Since \(P_\xi(t,\xi)\in[-1,0]\),
\[
\Psi'(v)
=
1-
\frac{\varepsilon}{1-\varepsilon}
P_\xi\!\left(
t,
\frac{f_1+\varepsilon v}{f_2}
\right)
\geq 1>0.
\]
Thus \eqref{eq:Vplus_implicit_full} has at most one solution for each
\((t,f_1,f_2)\). Existence is supplied by Proposition~\ref{prop_explicit1}: setting
\[
\xi=\frac{f_1}{f_2},
\qquad
v=f_2G^+(t,\xi),
\]
reduces \eqref{eq:Vplus_implicit_full} exactly to
\eqref{Gplusexplicit}. Hence $V^+(t,f_1,f_2)=f_2G^+\!\left(t,\frac{f_1}{f_2}\right)$
is the unique Case~1 feedback price. Since Proposition~\ref{prop:roll_pde}
shows that this price satisfies \eqref{eq:pde_vplus}--\eqref{terminalplus}, the result follows.
\end{proof}

\begin{corollary}[Case 2]
Let \(G^-(t,\xi)\) solve the reduced Case~2 feedback PDE \eqref{eq:Gminus_reduced} with terminal condition \eqref{eq:Gminus_terminal}. 
Then $G^-$ is determined by the unique solution of the scalar equation 
\[
G^-(t,\xi)
=
\frac{1}{1-\varepsilon}
C\!\left(t,\xi-\varepsilon G^-(t,\xi)\right),
\]
where \(C(t,\xi)\) is the standard Black-Scholes-Margrabe call price function given explicitly in \eqref{Cformula}. Furthermore, $V^-(t,f_1,f_2)=f_2\,G^-(t,\xi)$ is the {\em unique} solution in $\mathcal{C}^{1,2,2}([0,T)\times\R_+\times\R_+)$ of the nonlinear PDE problem \eqref{eq:pde_vminus}--\eqref{eq:vminus_terminal}.
\end{corollary}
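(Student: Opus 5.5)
The plan follows Proposition~\ref{prop_explicit1} and Corollary~\ref{cor:case1_uniqueness}, adapted to the call. There are two parts: a scalar equation in $G^-$, and then uniqueness of $V^-$ as a feedback price.

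\textbf{Scalar representation.} On the region where $1-\varepsilon G^-_\xi>0$, I will set $G^-(t,\xi)=H(t,y)$ with $y=\xi-\varepsilon G^-(t,\xi)$. Differentiating gives
\[
G^-_t=(1-\varepsilon G^-_\xi)H_t,\qquad G^-_{\xi\xi}=(1-\varepsilon G^-_\xi)^3H_{yy},
\]
and the coefficient $(B^-/A^-)\xi=y/(1-\varepsilon G^-_\xi)$. Substituting into \eqref{eq:Gminus_reduced} therefore reduces it to the Margrabe equation \eqref{MargPDE} for $H$, multiplied by the factor $(1-\varepsilon G^-_\xi)$.

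At $t=T$ the terminal condition reads $H=(y+\varepsilon H-D)^+$, which gives $H(T,y)=(y-D)^+/(1-\varepsilon)$. Lemma~\ref{Marg_lemma} then yields $H=C/(1-\varepsilon)$, which is the stated scalar equation.

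Existence and invertibility come directly from Corollary~\ref{cor:implicit_margrabe_minus}. The map $y\mapsto y+\frac{\varepsilon}{1-\varepsilon}C(t,y)$ is a strictly increasing bijection of $(0,\infty)$, since $0\le C_y\le1$. Along the resulting solution, $1-\varepsilon G^-_\xi\in[1-\varepsilon,1]$, so the a priori assumption holds a posteriori.

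\textbf{Scalar uniqueness.} Uniqueness of the root is less immediate than in Case~1. The function $\Phi(g)=(1-\varepsilon)g-C(t,\xi-\varepsilon g)$ has
\[
\Phi'(g)=1-\varepsilon+\varepsilon C_y\ge1-\varepsilon>0,
\]
so $\Phi$ is strictly increasing wherever $C$ is defined, that is, where $y>0$. I will therefore restrict to that admissible range, as in the Corollary.

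\textbf{Uniqueness of $V^-$.} I will mimic Corollary~\ref{cor:case1_uniqueness}, taking $V^-\in\mathcal{C}^{1,2,2}$ to be any solution of \eqref{eq:pde_vminus}--\eqref{eq:vminus_terminal}. The dynamics derivation of Corollary~\ref{cor:short_side_pde} is reversible: if $V^-$ solves the PDE and $F^{\mathrm{obs}}_{t,T}$ is defined implicitly by $F^{\mathrm{lic}}=f_1-\varepsilon V^-$, matching coefficients shows that $V^-(t,F^{\mathrm{obs}}_{t,T},F^{\mathrm{obs}}_{t,T_2})$ is a local martingale. This inversion needs $A^-=1-\varepsilon V^-_{f_1}\neq0$.

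At $T$, the pointwise equation $V_T=(F^{\mathrm{lic}}_T+\varepsilon V_T-DF_{T,T_2})^+$ forces
\[
V_T=\frac{1}{1-\varepsilon}\bigl(F^{\mathrm{lic}}_T-DF_{T,T_2}\bigr)^+,
\]
because $\varepsilon<1$ rules out any second root. Granting the true-martingale property, I get $V^-_t=\frac{f_2}{1-\varepsilon}C\bigl(t,(f_1-\varepsilon V^-_t)/f_2\bigr)$. This is exactly the scalar equation, whose solution is unique by monotonicity of $\Psi(v)=v-\frac{f_2}{1-\varepsilon}C\bigl(t,(f_1-\varepsilon v)/f_2\bigr)$, with $\Psi'\ge1$. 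Hence $V^-=f_2G^-(t,f_1/f_2)$.

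\textbf{Main obstacle.} The hard step is justifying the reverse direction for an arbitrary classical solution. This requires (i) ensuring $A^->0$ and $f_1-\varepsilon V^->0$, so that $F^{\mathrm{lic}}$ stays in $\R_+$ and the implicit map is well defined, and (ii) upgrading the local martingale to a true martingale. For (ii), I would impose a linear growth bound $0\le V^-\le c(f_1+f_2)$, which is natural since $V^-$ is homogeneous and dominated by the payoff, and then use integrability of the GBMs. If such a bound cannot be derived from $\mathcal{C}^{1,2,2}$ alone, I will state it as part of the solution class.
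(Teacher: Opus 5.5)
Your representation step follows the paper: the substitution $y=\xi-\varepsilon G^-$, the terminal value $(y-D)^+/(1-\varepsilon)$, Lemma~\ref{Marg_lemma}, monotonicity of the scalar map, and existence from Corollary~\ref{cor:implicit_margrabe_minus}.

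You genuinely differ on the uniqueness of $V^-$. The paper invokes Corollary~\ref{cor:case1_uniqueness}. That result starts from a price already obeying \eqref{eq:forward_convention}, so $V^-_t=\mathbb{E}^{\mathbb{Q}}[V^-_T\mid\mathcal{F}_t]$ holds by definition, and then only observes that this price solves the PDE. It never shows that an arbitrary classical solution of \eqref{eq:pde_vminus}--\eqref{eq:vminus_terminal} is such a price.

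Your reversal supplies exactly that step. Once $A^-\neq0$, $F^{\mathrm{obs}}_{t,T}$ is an It\^o process by the implicit function theorem. If $\mu$ denotes its drift, matching drifts reduces to $\mu A^-/\varepsilon=0$. So the paper's argument is shorter but only gives uniqueness of the feedback price. Yours gives uniqueness among PDE solutions, in a restricted class.

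You are right that $\mathcal{C}^{1,2,2}$ alone does not suffice. Let $W(t,f_2)\not\equiv0$ be a Tychonoff-type solution of $W_t+\frac12\sigma_2^2(t)f_2^2W_{f_2f_2}=0$ with $W\to0$ as $t\uparrow T$. Put $U(t,l,f_2)=\frac{f_2}{1-\varepsilon}C(t,l/f_2)+W(t,f_2)$, with the first term extended by $0$ to $l\le0$; this extension is smooth because $C$ is flat at $0$. Then $1+\varepsilon U_l\ge1$. Defining $V^-$ by $V^-(t,l+\varepsilon U,f_2):=U(t,l,f_2)$ gives a second classical solution with the same terminal data, since this change of variables turns \eqref{eq:pde_vminus} into \eqref{margrabePDE} in the variables $(l,f_2)$.

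Two refinements make your plan close.

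\textbf{$A^->0$ need not be assumed.} $A^-$ must be nonzero for the coefficients of \eqref{eq:pde_vminus} to be defined, and it is continuous on a connected domain. If $A^-<0$, then $V^-_{f_1}>1/\varepsilon>1$ everywhere, which contradicts the $1$-Lipschitz terminal payoff as $t\uparrow T$.

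\textbf{Your growth constant must satisfy $c\varepsilon<1$.} Since $f_1=F^{\mathrm{lic}}+\varepsilon V^-$, the bound $V^-\le c(f_1+f_2)$ yields $V^-\le\frac{c}{1-c\varepsilon}(F^{\mathrm{lic}}_{t,T}+F^{\mathrm{obs}}_{t,T_2})$ only when $c\varepsilon<1$. For larger $c$ it gives no control along paths.

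This costs nothing. $V^-$ is not dominated by the payoff, but $C(t,y)\le y$ gives $G^-\le\xi$, i.e.\ $V^-\le f_1$, for the true solution. In the class $0\le V^-\le c(f_1+f_2)$ with $c<1/\varepsilon$, three things follow:
\begin{enumerate}
\item The map $f_1\mapsto f_1-\varepsilon V^-$ covers $(0,\infty)$.
\item The local martingale is dominated by suprema of geometric Brownian motions, so it is a true martingale.
\item States with $f_1-\varepsilon V^-\le0$ are excluded. At a boundary point of $\{f_1-\varepsilon V^->0\}$, continuity would force $\xi-\varepsilon G^-(t,\xi)=0$, contradicting $y>0$.
\end{enumerate}
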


\begin{proof}
The proof follows similarly to that of Proposition \ref{prop_explicit1}, with the transformation \(y=\xi-\varepsilon G^-\), and Corollary \ref{cor:case1_uniqueness}.
\end{proof}

\subsection{Oil in Negative Territory}\label{sec:bachelier}
The lognormal specification in \eqref{dF2eqn} constrains the licensed price $F_{t,T}^{\text{lic}}$ to remain positive, so that any negativity of the observed front-month futures price $F_{t,T}^{\text{obs}}$ must  come from 
the correction term $-\varepsilon V^+$ in \eqref{eq:long_feedback_relation}. We first analyze the stochastic dynamics of $F_{t,T}^{\text{obs}}$. 

Recall from \eqref{eq:fobs-dyn} that under the risk--neutral measure \(\mathbb{Q}\), $F_{t,T}^{\text{obs}}$ is a martingale: 
\begin{equation}
dF^{\text{obs}}_{t,T} \;=\; \beta_t\,dW^1_t + \gamma_t\,dW^2_t. \label{dFQ}
\end{equation}
Using the explicit solution of Proposition \ref{prop_explicit1}, we write $$\beta_t = \beta(t,F_{t,T}^{\text{obs}}, F^{\mathrm{obs}}_{t,T_2}), \qquad \gamma_t = \gamma(t,F_{t,T}^{\text{obs}}, F^{\mathrm{obs}}_{t,T_2}),$$ 
where
\begin{align*}
 \gamma(t,f_1,f_2) &= -\frac{\varepsilon\sigma_2(t)f_2}{1-\varepsilon}\,DN(-d_2(t,\xi+\varepsilon G^+(t,\xi))), \quad \xi=f_1/f_2, \\  
 \beta(t,f_1,f_2) &= 
 \sigma_1(t)f_1 -\frac{\sigma_1(t)}{\sigma_2(t)}\gamma(t,f_1,f_2),
\end{align*}
with $d_2$ defined in~\eqref{eq:d1d2}. For $0<\varepsilon<1$, $t<T$, positive integrated variance and positive $\sigma_1(t),\sigma_2(t)$, we have $\gamma_t<0$ and $\beta_t>\sigma_1(t)f_1$. This comparison concerns individual diffusion loadings. The total instantaneous variance is $\beta_t^2+2\rho\beta_t\gamma_t+\gamma_t^2$ and cannot be ordered from the first loading alone.

At $f_1=0$, the scalar representation gives $0<G^+(t,0)<\infty$, and
\[
\begin{aligned}
\gamma(t,0,f_2)&=-\frac{\varepsilon D\sigma_2(t)f_2}{1-\varepsilon}
N\!\left(-d_2(t,\varepsilon G^+(t,0))\right),\\
\beta(t,0,f_2)&=-\frac{\sigma_1(t)}{\sigma_2(t)}\gamma(t,0,f_2).
\end{aligned}
\]
Consequently, when $\sigma_M^2(t)>0$, the instantaneous variance at zero is
\[
\beta_t^2+2\rho\beta_t\gamma_t+\gamma_t^2
=\frac{\gamma_t^2}{\sigma_2(t)^2}\sigma_M^2(t)>0.
\]
Zero is therefore not a degeneracy of the diffusion in the interior of the admissible domain. Under the regularity assumptions of the model, paths can cross it into negative prices. Equivalence of the pricing and physical measures on the finite modelling horizon preserves whether such a crossing event has positive probability, although its probability need not be the same under the two measures.

\subsection{Bachelier Base Model}
In the April 2020 regime it is natural to allow the base model for $F^{\text{lic}}_{t,T}$ itself to cross zero: with Cushing storage at operational capacity at that time, the value of an undeliverable barrel was effectively negative, and the CME itself switched its options settlement model from Black (lognormal dynamics) to Bachelier (normal dynamics) on April 22, 2020 to allow for negative prices \citep{CME2020Bachelier}. The next proposition adapts the feedback mechanism to arithmetic dynamics: the scaling homogeneity of the geometric model is replaced by translation invariance, the calendar ratio by the calendar spread, and the Black-Scholes-Margrabe put and call exchange option formulas by analogous Bachelier formulas. 

\begin{proposition}[Normal dynamics: feedback PDE, spread reduction, and Bachelier limit]
\label{prop:bachelier}
Under $\mathbb{Q}$, suppose both $F^{\text{lic}}_{t,T}$ and $F^{\text{obs}}_{t,T_2}$  follow arithmetic Brownian dynamics:
\[
    dF^{\text{lic}}_{t,T} = \bar\sigma_1(t)\,dW^1_t,
    \qquad
    dF^{\text{obs}}_{t,T_2} = \bar\sigma_2(t)\,dW^2_t,
    \qquad
    d\langle W^1, W^2\rangle_t = \rho\,dt,
\]
where $\bar\sigma_1(\cdot), \bar\sigma_2(\cdot) \ge 0$ are dollar volatilities, and $F^{\text{obs}}_{t,T}$ is given by the long-side feedback relation \eqref{eq:long_feedback_relation}.  The rollover option value $V^+$ is defined in \eqref{Vpmdef}.
Then we have: 
\begin{enumerate}
    \item[(i)] Assuming $V^+(t,f_1,f_2) \in \mathcal{C}^{1,2,2}$ on $[0,T)\times \R\times\R^{(+)}$, it solves the nonlinear PDE
    \begin{equation}\label{eq:pde_bachelier}
        V^+_t
        + \frac{1}{2}\,\frac{\bar\sigma_1^2 - 2\rho\,\bar\sigma_1  C^+ + (C^+)^2}{(A^+)^2}\,V^+_{f_1 f_1}
        + \frac{\rho\,\bar\sigma_1 -  C^+}{A^+}\,\bar\sigma_2\,V^+_{f_1 f_2}
        + \frac{1}{2}\,\bar\sigma_2^2\,V^+_{f_2 f_2}
        = 0,
    \end{equation}
    with $A^+ := 1 + \varepsilon V^+_{f_1}$, $C^+ := \varepsilon\,\bar\sigma_2(t)\,V^+_{f_2}$, and terminal condition $V^+(T,f_1,f_2) = (D f_2 - f_1)^+$. 
    \item[(ii)] \emph{Spread reduction.} The PDE \eqref{eq:pde_bachelier} and its terminal data are invariant under the translation $(f_1, f_2) \mapsto (f_1 + Dc,\, f_2 + c)$, $c \in \mathbb{R}$, so $V^+$ depends on $(f_1, f_2)$ only through the calendar spread and admits the representation
    \begin{equation}
        V^+(t,f_1,f_2) = G^+(t,s), \qquad s := D f_2 - f_1 \in \mathbb{R}.\label{transfBach}
    \end{equation}
    The reduced function $G^+ : [0,T] \times \mathbb{R} \to \mathbb{R}_+$ solves the nonlinear PDE problem
    \begin{equation}\label{eq:G_bachelier}
        G^+_t + \frac{1}{2}\,\frac{\Sigma_N^2(t)}{\bigl(1 - \varepsilon G^+_s\bigr)^2}\,G^+_{ss} = 0,
        \qquad
        G^+(T,s) = s^+,
    \end{equation}
    where
    \begin{equation}\label{eq:SigmaN_def}
        \Sigma_N^2(t) := \bar\sigma_1^2(t) - 2\rho D\,\bar\sigma_1(t)\bar\sigma_2(t) + D^2\bar\sigma_2^2(t)
    \end{equation}
    is the arithmetic variance of the spread, the Bachelier analogue of $\sigma_M^2$ in \eqref{eq:sigmaM_def}. 
    \item[(iii)] \emph{Bachelier limit.} In the no-feedback limit $\varepsilon \to 0$, \eqref{eq:G_bachelier} is the heat equation and we have $V^+\to V^{(0)+}$, where $V^{(0)+}(t,f_1,f_2)=P_B(t,s)$ with     
    \begin{equation}
        P_B(t,s) = s\,N\!\left(\frac{s}{v}\right) + v\,\phi\!\left(\frac{s}{v}\right),
        \qquad
        v^2 := \int_t^T \Sigma_N^2(u)\,du, \label{PBdef}
    \end{equation}
    the Bachelier spread-option value, replacing the Black-Scholes-Margrabe formula of the lognormal model. Here $\phi$ is the standard normal p.d.f. 
\end{enumerate}
\end{proposition}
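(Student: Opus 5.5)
For (i) I will follow the proof of Proposition~\ref{prop:roll_pde} line by line, replacing the geometric dynamics by arithmetic ones. Since $V^+$ and $F^{\text{lic}}_{t,T}$ are $\mathbb{Q}$-martingales under \eqref{eq:forward_convention}, the difference $F^{\text{obs}}_{t,T}=F^{\text{lic}}_{t,T}-\varepsilon V^+$ is a martingale. It therefore has the form $dF^{\text{obs}}_{t,T}=\beta_t\,dW^1_t+\gamma_t\,dW^2_t$. Applying It\^o's formula to $V^+(t,F^{\text{obs}}_{t,T},F^{\text{obs}}_{t,T_2})$ with $dF^{\text{obs}}_{t,T_2}=\bar\sigma_2\,dW^2_t$, and comparing with $\varepsilon\,dV^+=(\bar\sigma_1-\beta_t)\,dW^1_t-\gamma_t\,dW^2_t$, gives two equations from the martingale coefficients:
\[
\beta_t=\frac{\bar\sigma_1}{A^+},\qquad \gamma_t=-\frac{\varepsilon\bar\sigma_2V^+_{f_2}}{A^+}=-\frac{C^+}{A^+}.
\]
The $dt$ coefficient must vanish. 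Substituting into $\beta^2+2\rho\beta\gamma+\gamma^2$ and $(\rho\beta+\gamma)\bar\sigma_2$ gives exactly the coefficients in \eqref{eq:pde_bachelier}, and the terminal condition is read off from \eqref{eq:r_terminal_val}. This part is routine.

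For (ii) I will first check invariance. Under $(f_1,f_2)\mapsto(f_1+Dc,f_2+c)$ the coefficients of \eqref{eq:pde_bachelier} are unchanged, because they depend only on derivatives of $V^+$ and not on $(f_1,f_2)$ explicitly, and $Df_2-f_1$ is preserved. Rather than argue that invariance forces the reduced form, which would need a uniqueness result, I will proceed as in Proposition~\ref{prop:long_dimension_reduction}: insert the ansatz $V^+=G^+(t,s)$ with $s=Df_2-f_1$. The chain rule gives
\[
V^+_{f_1}=-G^+_s,\quad V^+_{f_2}=DG^+_s,\quad V^+_{f_1f_1}=G^+_{ss},\quad V^+_{f_1f_2}=-DG^+_{ss},\quad V^+_{f_2f_2}=D^2G^+_{ss}.
\]
Thus $A^+=1-\varepsilon G^+_s$ and $C^+=\varepsilon D\bar\sigma_2G^+_s$. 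Collecting the $G^+_{ss}$ terms over the common denominator $(A^+)^2$, the numerator becomes
\[
\bar\sigma_1^2-2\rho\bar\sigma_1C^++(C^+)^2-2D\bar\sigma_2(\rho\bar\sigma_1-C^+)A^++D^2\bar\sigma_2^2(A^+)^2 .
\]

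This numerator is the main obstacle. I expect it to collapse to $\Sigma_N^2(t)$: the terms in $\varepsilon G^+_s$ and $(\varepsilon G^+_s)^2$ should cancel exactly, as they did in the lognormal case. To see this I will write $x=\varepsilon G^+_s$, so that $C^+=D\bar\sigma_2x$ and $A^+=1-x$. The expression then becomes a quadratic in $x$.
\begin{itemize}
\item The $x^2$ coefficient is $D^2\bar\sigma_2^2-2D^2\bar\sigma_2^2+D^2\bar\sigma_2^2=0$.
\item The linear coefficient is $-2\rho D\bar\sigma_1\bar\sigma_2+2D^2\bar\sigma_2^2+2\rho D\bar\sigma_1\bar\sigma_2-2D^2\bar\sigma_2^2=0$.
\item The constant term is $\Sigma_N^2$.
\end{itemize}
This yields \eqref{eq:G_bachelier}, with terminal data $G^+(T,s)=s^+$. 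To justify that the solution $V^+$ actually has this form, I will invoke the same uniqueness route as Corollary~\ref{cor:case1_uniqueness}. The terminal relation gives $V^+_T=(DF^{\text{obs}}_{T,T_2}-F^{\text{lic}}_{T,T})^+/(1-\varepsilon)$. Its conditional expectation is a function of the licensed spread $DF^{\text{obs}}_{t,T_2}-F^{\text{lic}}_{t,T}=s-\varepsilon v$, which pins down $v$ by a monotone scalar equation depending only on $s$.

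For (iii), setting $\varepsilon=0$ reduces \eqref{eq:G_bachelier} to the backward heat equation $G_t+\tfrac12\Sigma_N^2G_{ss}=0$ with data $s^+$. The spread $S_u$ is Gaussian with mean $s$ and variance $v^2$, so $\mathbb{E}[S_T^+]=sN(s/v)+v\phi(s/v)$, which is $P_B$ in \eqref{PBdef}. For convergence as $\varepsilon\to0$, I will use the scalar representation $(1-\varepsilon)G^+=P_B(t,s-\varepsilon G^+)$, which comes from the same change of variables $y=s-\varepsilon G^+$ as in Proposition~\ref{prop_explicit1}. Because $0\le\partial_sP_B\le1$, this equation has a unique solution that depends continuously on $\varepsilon$, so $G^+\to P_B$ as $\varepsilon\to0$.
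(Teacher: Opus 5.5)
Your proposal is correct and follows the paper's proof: the same coefficient matching $\beta_t=\bar\sigma_1/A^+$, $\gamma_t=-C^+/A^+$ for (i), the spread ansatz $V^+=G^+(t,Df_2-f_1)$ for (ii), where your check that the $x$ and $x^2$ terms cancel supplies the ``routine calculation'' the paper omits, and the Gaussian expectation $\mathbb{E}[(s+vZ)^+]$ for (iii). Your two additions are sound and go slightly beyond the paper's proof. The first is the Corollary~\ref{cor:case1_uniqueness}-style argument that the feedback price really depends only on $s$; translation invariance alone would require PDE uniqueness. The second is the use of the scalar equation $(1-\varepsilon)G^+=P_B(t,s-\varepsilon G^+)$ to justify the $\varepsilon\to0$ limit. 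Both anticipate what the paper states, without proof, in its subsequent Bachelier explicit-solution proposition.
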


\begin{proof}
\emph{(i)} The argument follows the proof of Proposition~\ref{prop:roll_pde}. 
Writing $dF^{\text{obs}}_{t,T} = \beta_t\,dW^1_t + \gamma_t\,dW^2_t$ (zero drift by the martingale property), and matching coefficients between the It\^o expansion of $V^+$ and $dV^+ = \varepsilon^{-1}(dF^{\text{lic}} - dF^{\text{obs}})$ gives
\[
    \beta_t = \frac{\bar\sigma_1(t)}{A^+},
    \qquad
    \gamma_t = -\,\frac{\varepsilon\,V^+_{f_2}\,\bar\sigma_2(t)}{A^+},
\]
and the drift identification yields the PDE \eqref{eq:pde_bachelier} with $C^+ = \varepsilon\bar\sigma_2 V^+_{f_2}$. 
\emph{(ii)} follows from routine calculations with the transformation \eqref{transfBach}.
\emph{(iii)} at $\varepsilon = 0$, \eqref{eq:G_bachelier} is the heat equation with terminal $s^+$; the stated formula is its classical solution, $\mathbb{E}[(s + v Z)^+] = s N(s/v) + v\phi(s/v)$ for $Z \sim \mathcal{N}(0,1)$.
\end{proof}

As with the lognormal base model, we have an explicit solution, up to solving a nonlinear scalar equation. 
\begin{proposition}
The Case~1 reduced normal-dynamics feedback function, solution of the nonlinear PDE problem \eqref{eq:G_bachelier}  has explicit solution
\[
G^+(t,s)
=
\frac{1}{1-\varepsilon}
P_B\!\left(
t,
s-\varepsilon G^+(t,s)
\right),
\]
where $P_B$ is given in \eqref{PBdef}. Moreover, the solution is unique in the sense of Corollary \ref{cor:case1_uniqueness}.
\end{proposition}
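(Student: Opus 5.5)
We mirror the proof of Proposition~\ref{prop_explicit1}, with the multiplicative hodograph substitution $y=\xi+\varepsilon G^+$ replaced by its additive analogue. Assume a priori that $1-\varepsilon G^+_s(t,s)>0$. Then the map $s\mapsto y:=s-\varepsilon G^+(t,s)$ is locally invertible, and we set $G^+(t,s)=H(t,y)$.

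Differentiating this identity gives
\[
H_y=\frac{G^+_s}{1-\varepsilon G^+_s},\qquad
H_{yy}=\frac{G^+_{ss}}{(1-\varepsilon G^+_s)^3},\qquad
G^+_t=(1-\varepsilon G^+_s)\,H_t .
\]
The last relation comes from holding $y$ fixed, which forces $s_t=\varepsilon G^+_t/(1-\varepsilon G^+_s)$. Substituting into \eqref{eq:G_bachelier}, the factor $(1-\varepsilon G^+_s)^{-2}$ cancels two of the three powers in $G^+_{ss}=(1-\varepsilon G^+_s)^3H_{yy}$. This leaves
\[
(1-\varepsilon G^+_s)\bigl[H_t+\tfrac12\Sigma_N^2(t)H_{yy}\bigr]=0 .
\]
So $H$ solves the linear heat equation of Proposition~\ref{prop:bachelier}(iii).

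For the terminal data at $t=T$, we have $G=s^+$ and $y=s-\varepsilon s^+$. This gives $y=(1-\varepsilon)s$ when $s>0$ and $y=s$ otherwise, so $H(T,y)=y^+/(1-\varepsilon)$. By linearity and (iii), $H(t,y)=P_B(t,y)/(1-\varepsilon)$, which yields the stated scalar equation.

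We then verify the a priori assumption a posteriori. Since $\partial_yP_B=N(y/v)\in[0,1]$, it is cleaner to construct the solution directly in the style of Corollary~\ref{cor:implicit_margrabe_minus}. Define
\[
s=y+\frac{\varepsilon}{1-\varepsilon}P_B(t,y).
\]
Its derivative with respect to $y$ is $1+\frac{\varepsilon}{1-\varepsilon}N(y/v)\in[1,\tfrac1{1-\varepsilon}]$, so the map is a strictly increasing bijection of $\R$. It therefore determines a unique $y(t,s)$, and setting $G^+:=P_B(t,y)/(1-\varepsilon)$ satisfies $y=s-\varepsilon G^+$. Moreover,
\[
G^+_s=\frac{N}{1-\varepsilon+\varepsilon N}\in[0,1],\qquad 1-\varepsilon G^+_s=\frac{1-\varepsilon}{1-\varepsilon+\varepsilon N}\in[1-\varepsilon,1],
\]
so the denominator in \eqref{eq:G_bachelier} never vanishes, confirming that $G^+$ is a classical solution.

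Uniqueness of the scalar root is immediate. The function $\Phi(g)=(1-\varepsilon)g-P_B(t,s-\varepsilon g)$ has $\Phi'=1-\varepsilon+\varepsilon N\ge 1-\varepsilon>0$, so it is strictly increasing and has at most one zero.

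For uniqueness in the sense of Corollary~\ref{cor:case1_uniqueness}, we repeat that argument with the additive structure. At time $T$ the feedback relation gives $V^+_T=(Df_2-F^{\rm lic}_{T,T})^+/(1-\varepsilon)$. The spread $DF^{\rm obs}_{t,T_2}-F^{\rm lic}_{t,T}$ is Gaussian with variance rate $\Sigma_N^2$, so the forward convention \eqref{eq:forward_convention} gives $V^+_t=P_B(t,DF^{\rm obs}_{t,T_2}-F^{\rm lic}_{t,T})/(1-\varepsilon)$. Substituting $F^{\rm lic}=f_1+\varepsilon v$ produces
\[
v=\frac{P_B(t,s-\varepsilon v)}{1-\varepsilon},
\]
whose root is unique by the monotonicity of $\Phi$ above.

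The main obstacle is the hodograph step. We must check carefully that the time derivative transforms as $G_t=(1-\varepsilon G_s)H_t$ with the correct sign, and that the kink of the terminal payoff maps correctly, since the map $s\mapsto y$ is only Lipschitz at $t=T$. The direct bijection construction avoids any need to invert at $t=T$.
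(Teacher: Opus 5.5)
Your proposal is correct and follows the route the paper indicates when it says the proof follows Proposition~\ref{prop_explicit1} and Corollary~\ref{cor:case1_uniqueness}: the additive substitution $y=s-\varepsilon G^+$ turns \eqref{eq:G_bachelier} into the heat equation with terminal data $y^+/(1-\varepsilon)$, and uniqueness comes from the pointwise terminal solve plus a strictly increasing scalar map. The details the paper omits and you supply all check out, namely the transformations $G^+_t=(1-\varepsilon G^+_s)H_t$ and $G^+_{ss}=(1-\varepsilon G^+_s)^3H_{yy}$, the bound $1-\varepsilon G^+_s\in[1-\varepsilon,1]$ from $\partial_yP_B=N(y/v)$, and the direct bijection $s=y+\frac{\varepsilon}{1-\varepsilon}P_B(t,y)$ in the style of Corollary~\ref{cor:implicit_margrabe_minus}.
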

\begin{proof}
The proof follows similar to those of Proposition \ref{prop_explicit1} and Corollary \ref{cor:case1_uniqueness}, and we omit the details.
\end{proof}

\begin{proposition}[Normal dynamics: Case 2 feedback PDE, spread reduction, and Bachelier limit]
\label{prop:bachelier_minus}
Under $\mathbb{Q}$, suppose both $F^{\text{lic}}_{t,T}$ and $F^{\text{obs}}_{t,T_2}$ follow arithmetic Brownian dynamics:
\[
    dF^{\text{lic}}_{t,T} = \bar\sigma_1(t)\,dW^1_t,
    \qquad
    dF^{\text{obs}}_{t,T_2} = \bar\sigma_2(t)\,dW^2_t,
    \qquad
    d\langle W^1, W^2\rangle_t = \rho\,dt,
\]
where $\bar\sigma_1(\cdot), \bar\sigma_2(\cdot) \ge 0$ are dollar volatilities, and $F^{\text{obs}}_{t,T}$ is given by the short-side feedback relation in Corollary~\ref{cor:short_side_pde}. The rollover option value $V^-$ is defined in \eqref{Vpmdef}.
Then we have:
\begin{enumerate}
    \item[(i)] Assuming $V^-(t,f_1,f_2) \in \mathcal{C}^{1,2,2}$ on $[0,T)\times \R\times\R^{(+)}$, it solves the nonlinear PDE
    \begin{equation}\label{eq:pde_bachelier_minus}
        V^-_t
        + \frac{1}{2}\,
        \frac{\bar\sigma_1^2 + 2\rho\,\bar\sigma_1 C^- + (C^-)^2}
        {(A^-)^2}\,V^-_{f_1 f_1}
        + \frac{\rho\,\bar\sigma_1 + C^-}{A^-}\,
        \bar\sigma_2\,V^-_{f_1 f_2}
        + \frac{1}{2}\,\bar\sigma_2^2\,V^-_{f_2 f_2}
        = 0,
    \end{equation}
    with
    \[
        A^- := 1-\varepsilon V^-_{f_1},
        \qquad
        C^- := \varepsilon\,\bar\sigma_2(t)\,V^-_{f_2},
    \]
    and terminal condition $V^-(T,f_1,f_2)=(f_1-Df_2)^+$.

    \item[(ii)] \emph{Spread reduction.} The PDE \eqref{eq:pde_bachelier_minus} and its terminal data are invariant under the translation
    \[
        (f_1,f_2)\mapsto(f_1+Dc,f_2+c),
        \qquad c\in\mathbb{R},
    \]
    so $V^-$ depends on $(f_1,f_2)$ only through the calendar spread and admits the representation
    \begin{equation}
        V^-(t,f_1,f_2)=G^-(t,s),
        \qquad
        s:=f_1-Df_2\in\mathbb{R}.
        \label{transfBachMinus}
    \end{equation}
    The reduced function $G^-:[0,T]\times\mathbb{R}\to\mathbb{R}_+$ solves the nonlinear PDE problem
    \begin{equation}\label{eq:G_bachelier_minus}
        G^-_t
        +\frac{1}{2}\,
        \frac{\Sigma_N^2(t)}
        {\bigl(1-\varepsilon G^-_s\bigr)^2}
        G^-_{ss}=0,
        \qquad
        G^-(T,s)=s^+,
    \end{equation}
    where $\Sigma_N^2(t)$ is given in \eqref{eq:SigmaN_def}.

    \item[(iii)] \emph{Bachelier limit.} In the no-feedback limit $\varepsilon\to0$, \eqref{eq:G_bachelier_minus} is the heat equation and we have $V^-\to V^{(0)-}$, where
    \[
        V^{(0)-}(t,f_1,f_2)=C_B(t,s)
    \]
    with
    \begin{equation}
        C_B(t,s)
        =
        s\,N\!\left(\frac{s}{v}\right)
        +
        v\,\phi\!\left(\frac{s}{v}\right),
        \qquad
        v^2:=\int_t^T\Sigma_N^2(u)\,du,
        \label{CBdef}
    \end{equation}
    the Bachelier spread-option value for the short-side payoff, replacing the Black-Scholes-Margrabe formula of the lognormal model. 
\end{enumerate}
\end{proposition}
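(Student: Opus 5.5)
The plan mirrors the proof of Proposition~\ref{prop:bachelier}, with the sign flip $\varepsilon\mapsto-\varepsilon$ used in Corollary~\ref{cor:short_side_pde}.

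\emph{Part (i).} Write the Case~2 relation as $F^{\text{obs}}_{t,T}=F^{\text{lic}}_{t,T}+\varepsilon V^-$. Under the forward convention \eqref{eq:forward_convention}, both $V^-$ and $F^{\text{lic}}$ are $\mathbb{Q}$-martingales, so $dF^{\text{obs}}_{t,T}=\beta_t\,dW^1_t+\gamma_t\,dW^2_t$ with no drift. Expand $dV^-$ by It\^o's formula, using $dF^{\text{obs}}_{t,T_2}=\bar\sigma_2\,dW^2_t$. Then match it against $\varepsilon\,dV^-=dF^{\text{obs}}-dF^{\text{lic}}=(\beta_t-\bar\sigma_1)\,dW^1_t+\gamma_t\,dW^2_t$.
\begin{itemize}
\item The $dW^1$ coefficients give $\varepsilon V^-_{f_1}\beta_t=\beta_t-\bar\sigma_1$, so $\beta_t=\bar\sigma_1/A^-$.
\item The $dW^2$ coefficients give $\varepsilon(V^-_{f_1}\gamma_t+V^-_{f_2}\bar\sigma_2)=\gamma_t$, so $\gamma_t=C^-/A^-$.
\item The $dt$ term must vanish, giving $V^-_t+\tfrac12(\beta^2+2\rho\beta\gamma+\gamma^2)V^-_{f_1f_1}+(\rho\beta+\gamma)\bar\sigma_2V^-_{f_1f_2}+\tfrac12\bar\sigma_2^2V^-_{f_2f_2}=0$.
\end{itemize}
Substituting $\beta_t,\gamma_t$ yields \eqref{eq:pde_bachelier_minus}. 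The terminal condition comes from \eqref{eq:r_terminal_val}, since $V^-_T=R^{(-)}_T$.

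\emph{Part (ii).} Under the shift $(f_1,f_2)\mapsto(f_1+Dc,f_2+c)$, all derivatives are unchanged, the coefficients depend only on derivatives, and $f_1-Df_2$ is invariant. So the problem is translation invariant. Alternatively, and more concretely, I will insert the ansatz $V^-=G^-(t,s)$ with $s=f_1-Df_2$. This gives
\[
V^-_{f_1}=G^-_s,\quad V^-_{f_2}=-DG^-_s,\quad V^-_{f_1f_1}=G^-_{ss},\quad V^-_{f_1f_2}=-DG^-_{ss},\quad V^-_{f_2f_2}=D^2G^-_{ss}.
\]
Hence $A^-=1-\varepsilon G^-_s$ and $C^-=-\varepsilon D\bar\sigma_2G^-_s$.

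The second-order coefficient is then $G^-_{ss}$ times
\[
\tfrac{1}{(A^-)^2}\big[\bar\sigma_1^2+2\rho\bar\sigma_1C^-+(C^-)^2-2D\bar\sigma_2A^-(\rho\bar\sigma_1+C^-)+D^2\bar\sigma_2^2(A^-)^2\big].
\]
The main algebraic check is that this bracket equals $\Sigma_N^2$.
\begin{itemize}
\item Group it as $\bar\sigma_1^2-2\rho\bar\sigma_1(D\bar\sigma_2A^--C^-)+(D\bar\sigma_2A^--C^-)^2$.
\item Note that $D\bar\sigma_2A^--C^-=D\bar\sigma_2(1-\varepsilon G^-_s)+\varepsilon D\bar\sigma_2G^-_s=D\bar\sigma_2$.
\item The bracket therefore collapses to $\bar\sigma_1^2-2\rho D\bar\sigma_1\bar\sigma_2+D^2\bar\sigma_2^2=\Sigma_N^2$.
\end{itemize}
This gives \eqref{eq:G_bachelier_minus}, with terminal data $(f_1-Df_2)^+=s^+$. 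For the claim that $V^-$ must take this form, rather than merely can, I will either appeal to the invariance together with uniqueness, as in Corollary~\ref{cor:case1_uniqueness}, or simply present the reduction as an ansatz, as the paper does.

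\emph{Part (iii).} At $\varepsilon=0$ the reduced PDE is the backward heat equation with time-dependent diffusivity $\Sigma_N^2$. Its solution is $\mathbb{E}[(s+vZ)^+]$ with $Z\sim\mathcal N(0,1)$, which equals $sN(s/v)+v\phi(s/v)$. Convergence as $\varepsilon\to0$ follows by analogy with the Case~1 scalar representation $(1-\varepsilon)G^-=C_B(t,s+\varepsilon G^-)$, whose solution depends continuously on $\varepsilon$.

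The main obstacle is the algebraic collapse in (ii), and it is settled by the identity $D\bar\sigma_2A^--C^-=D\bar\sigma_2$.
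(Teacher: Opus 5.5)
Your parts (i) and (ii) follow the paper's route: coefficient matching of the It\^o expansion against $\varepsilon\,dV^-=dF^{\text{obs}}_{t,T}-dF^{\text{lic}}_{t,T}$, then the spread substitution. Your algebra is correct, and you make explicit the collapse $D\bar\sigma_2A^--C^-=D\bar\sigma_2$, which the paper calls routine. The only slip there is a missing factor $\tfrac12$ in front of your bracket. Your identification of the $\varepsilon=0$ solution in (iii) is also the paper's argument.

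The one real error is the scalar equation you invoke for the limit $\varepsilon\to0$. It should be $(1-\varepsilon)G^-=C_B\bigl(t,s-\varepsilon G^-\bigr)$, not $C_B(t,s+\varepsilon G^-)$. In Case~2 you have $F^{\text{lic}}_{t,T}=f_1-\varepsilon v$, so the argument of $C_B$ is $F^{\text{lic}}_{t,T}-Df_2=s-\varepsilon v$. With your sign, the equation at $t=T$ reads $(1-\varepsilon)G^-=(s+\varepsilon G^-)^+$. For $s>0$ this gives $G^-=s/(1-2\varepsilon)$, and it has no solution at all when $\varepsilon\ge\tfrac12$. Either way it contradicts $G^-(T,s)=s^+$.

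Rather than borrowing the equation by analogy, derive it as in Corollary~\ref{cor:case1_uniqueness}:
\begin{itemize}
\item Solving the terminal identity pointwise gives $V^-_T=(F^{\text{lic}}_{T,T}-DF^{\text{obs}}_{T,T_2})^+/(1-\varepsilon)$.
\item Since $F^{\text{lic}}_{t,T}-DF^{\text{obs}}_{t,T_2}$ is a Gaussian martingale with variance rate $\Sigma_N^2$, you get $V^-_t=C_B(t,F^{\text{lic}}_{t,T}-DF^{\text{obs}}_{t,T_2})/(1-\varepsilon)$, which is the corrected equation.
\item The root is unique, because $g\mapsto(1-\varepsilon)g-C_B(t,s-\varepsilon g)$ has slope at least $1-\varepsilon>0$.
\item The equation depends on $(f_1,f_2)$ only through $s$. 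This settles your ``must versus can'' question in (ii).
\end{itemize}
Finally, $0\le\partial_sC_B\le1$ yields $0\le G^--C_B(t,s)\le\varepsilon G^-\le\frac{\varepsilon}{1-\varepsilon}C_B(t,s)$, which gives the convergence directly. The paper itself does not argue this limit; it only solves the $\varepsilon=0$ problem.
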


\begin{proof}
The results follow from the analogous Case~1 arguments in Proposition~\ref{prop:bachelier}, with the signs of the feedback terms reversed and the Case~2 terminal payoff $(f_1-Df_2)^+$.
\end{proof}
Finally, we give the explicit solution. 
\begin{proposition}
\label{prop:bachelier_explicit_minus}
The Case~2 reduced normal-dynamics feedback function, solution of the nonlinear PDE problem \eqref{eq:G_bachelier_minus}, has explicit solution
\[
G^-(t,s)
=
\frac{1}{1-\varepsilon}
C_B\!\left(
t,
s-\varepsilon G^-(t,s)
\right),
\]
where $C_B$ is given in \eqref{CBdef}. Moreover, the solution is unique in the sense of the analogous Case~2 result to Corollary~\ref{cor:case1_uniqueness}.
\end{proposition}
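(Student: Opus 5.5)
The plan mirrors the proof of Proposition~\ref{prop_explicit1}, with the Margrabe equation replaced by the heat equation in the spread variable. I would make the change of dependent and independent variables
\[
G^-(t,s)=H(t,y),\qquad y(t,s):=s-\varepsilon G^-(t,s),
\]
assuming a priori that $1-\varepsilon G^-_s>0$, so that $s\mapsto y$ is locally invertible. Differentiating $H(t,y)=G^-(t,s)$ gives
\[
H_y=\frac{G^-_s}{1-\varepsilon G^-_s},\qquad
G^-_t=(1-\varepsilon G^-_s)H_t,\qquad
G^-_{ss}=(1-\varepsilon G^-_s)^3H_{yy}.
\]
Substituting into \eqref{eq:G_bachelier_minus} yields
\[
(1-\varepsilon G^-_s)\Bigl[H_t+\tfrac12\Sigma_N^2(t)H_{yy}\Bigr]=0,
\]
so $H$ solves the heat equation with time-dependent diffusion coefficient $\Sigma_N^2(t)$. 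For the terminal data, $G^-(T,s)=s^+$ with $y=s-\varepsilon s^+$ gives $H(T,y)=y^+/(1-\varepsilon)$. Then part (iii) of Proposition~\ref{prop:bachelier_minus} gives $H(t,y)=C_B(t,y)/(1-\varepsilon)$, which is the stated scalar equation.

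Next I would check the a priori assumption and existence directly, as in Corollary~\ref{cor:implicit_margrabe_minus}. Define
\[
s=y+\frac{\varepsilon}{1-\varepsilon}C_B(t,y).
\]
Since $\partial_yC_B=N(y/v)\in[0,1]$, this map is strictly increasing and continuous. It tends to $\pm\infty$ as $y\to\pm\infty$, because $C_B(t,y)\to0$ as $y\to-\infty$ and $C_B(t,y)\le y^+ +v$. So it is a bijection of $\mathbb{R}$, and for each $s$ it produces a unique $y$ and hence $G^-(t,s)=C_B(t,y)/(1-\varepsilon)$. Implicit differentiation gives
\[
1-\varepsilon G^-_s=\frac{1-\varepsilon}{1-\varepsilon+\varepsilon N(y/v)}\in[1-\varepsilon,1],
\]
which confirms the nonvanishing factor a posteriori. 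Uniqueness of the scalar solution follows because $g\mapsto(1-\varepsilon)g-C_B(t,s-\varepsilon g)$ has derivative $1-\varepsilon+\varepsilon N(\cdot)\ge1-\varepsilon>0$.

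For the uniqueness "in the sense of Corollary~\ref{cor:case1_uniqueness}", I would repeat that argument probabilistically. At $T$, $V^-_T=(F^{\rm lic}_{T,T}+\varepsilon V^-_T-DF^{\rm obs}_{T,T_2})^+$. Solving this pointwise gives $V^-_T=(F^{\rm lic}_{T,T}-DF^{\rm obs}_{T,T_2})^+/(1-\varepsilon)$; the branch where the bracket is negative forces $V^-_T=0$, which is consistent. The spread $F^{\rm lic}-DF_{T_2}$ is Gaussian with variance $v^2$, so the conditional expectation is $C_B(t,F^{\rm lic}_{t,T}-DF^{\rm obs}_{t,T_2})/(1-\varepsilon)$. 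Substituting $F^{\rm lic}=f_1-\varepsilon v$ recovers the scalar equation, which has the unique root above.

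The only delicate points are the terminal-time limit, where $v\to0$ and $C_B\to y^+$ continuously, and the surjectivity of the $y\mapsto s$ map on all of $\mathbb{R}$. Unlike the lognormal Case~2, no positivity restriction on $y$ is needed here.
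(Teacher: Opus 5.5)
Your proposal is correct and follows the paper's route. The paper's proof reuses the Case~1 arguments with $P_B$ replaced by $C_B$: the change of variables $y=s-\varepsilon G^-$ reducing \eqref{eq:G_bachelier_minus} to the heat equation, as in Proposition~\ref{prop_explicit1}, and the pointwise terminal solve followed by conditional expectation from Corollary~\ref{cor:case1_uniqueness}. That is what you carry out, plus a direct check, as in Corollary~\ref{cor:implicit_margrabe_minus}, that $y\mapsto y+\frac{\varepsilon}{1-\varepsilon}C_B(t,y)$ is a bijection of $\mathbb{R}$ (the only slip is notational: in your last paragraph $v$ denotes both the option value and the spread standard deviation of \eqref{CBdef}).
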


\begin{proof}
The proof follows from the analogous Case~1 proof, with $V^+$ and $G^+$ replaced by $V^-$ and $G^-$, respectively, and $P_B$ replaced by $C_B$.
\end{proof}

\section{Properties of the Solution}\label{sec:numerics}

The representations in Proposition~\ref{prop_explicit1} and Corollary~\ref{cor:implicit_margrabe_minus} allow us to study option values, admissible prices and the effect of feedback on the diffusion. We then use Monte Carlo paths to illustrate negative-price crossings. Throughout this section, we take \(\sigma_1, \sigma_2\) to be constants.

\subsection{Feedback premium and option values}\label{sec:sufficiency}\label{sec:value}

We compare values at the same observed prices. On the common domain $\xi>0$, define the feedback premium by
\[
 \Pi^\pm(t,\xi;\ep)
 :=G^\pm(t,\xi;\ep)-G^\pm(t,\xi;0).
\]
Implicit differentiation of the scalar pricing equations gives
\begin{equation}\label{eq:dGdeps}
 \frac{\partial G^+(t,\xi;\ep)}{\partial\ep}
 =\frac{G^+(t,\xi;\ep)N(d_1^+)}{1-\ep+\ep N(-d_1^+)}\geq0,
 \qquad
 \frac{\partial G^-(t,\xi;\ep)}{\partial\ep}
 =\frac{G^-(t,\xi;\ep)N(-d_1^-)}{1-\ep+\ep N(d_1^-)}\geq0,
\end{equation}
where $d_1^\pm = d_1^\pm(t, \xi \pm \ep G^{\pm})$.
Both denominators are strictly positive. The feedback premia are therefore nonnegative and nondecreasing in $\ep$. At fixed $f_1,f_2$, the same conclusion holds for $V^\pm=f_2G^\pm$.

Figure~\ref{fig:value} shows the values and the additional premium separately. At the strike, both no-feedback values are approximately $3.10$ price units. At $\ep=0.95$, they rise to $11.25$ in Case~1 and $8.06$ in Case~2. The premium is not constant across observed prices, and its dependence on moneyness differs between the two cases.
\begin{figure}[htb]
\centering
\includegraphics[width=0.9\linewidth]{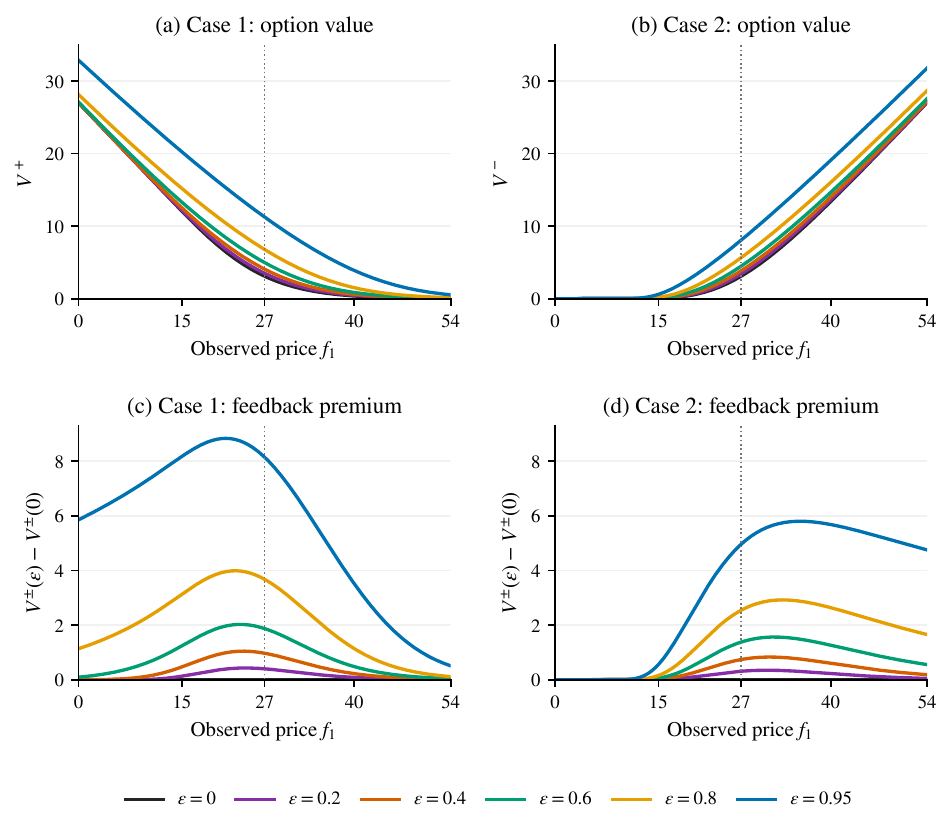}
\caption{Option values and feedback premia at fixed observed prices. Panels (a)--(b) show $V^+$ and $V^-$; panels (c)--(d) show $V^\pm(\ep)-V^\pm(0)=f_2\Pi^\pm$. Parameters are $D=0.90$, $\sigma_M=0.50$, $T-t=1/3$ year and $f_2=30$. The shared legend identifies $\ep$ in all panels; vertical dotted lines indicate $f_1=Df_2=27$. Prices and option values use the same arbitrary price unit.}
\label{fig:value}
\end{figure}

\subsection{Extension to Negative Observed Prices}
The formulas in Proposition~\ref{prop_explicit1} and Corollary~\ref{cor:implicit_margrabe_minus} were derived initially on the positive-price domain. 
Consider first Case~1. Let
$z:=\xi+\varepsilon G^+(t,\xi)$.
From Proposition~\ref{prop_explicit1}, $(1-\varepsilon)G^+(t,\xi)=P(t,z)$,
so that
\[
\xi
=
z-\frac{\varepsilon}{1-\varepsilon}P(t,z).
\]
The map from \(z>0\) to \(\xi\) is strictly increasing. As \(z\downarrow0\), using \eqref{Pformula},
$P(t,z)\to D$, 
and therefore $\xi\to-\frac{\varepsilon D}{1-\varepsilon}$.
Hence the Case~1 feedback model extends continuously from \(f_1>0\) through \(f_1=0\) and into negative observed prices. It remains well-defined on the domain
\begin{equation}\label{eq:admissible_case1}
\xi>-\frac{\varepsilon D}{1-\varepsilon},
\qquad\text{equivalently}\qquad
f_1>-\frac{\varepsilon D}{1-\varepsilon}f_2.
\end{equation}
For \(\varepsilon>0\), zero lies strictly inside this domain. Thus the observed front-month price \(F^{\text{obs}}_{t,T}\) can become negative while the licensed price \(F^{\text{lic}}_{t,T}\) remains positive. Beyond this boundary, the feedback price is no longer defined.

Case~2 is different. Writing
$z:=\xi-\varepsilon G^-(t,\xi)>0$, 
Corollary~\ref{cor:implicit_margrabe_minus} gives
$\xi=z+\varepsilon G^-(t,\xi)$.
Since \(z>0\) and \(G^-\geq0\), it follows that $\xi>0$.
Thus, in Case~2, the admissible domain remains \(f_1>0\), whereas in Case~1 the feedback model extends through zero and into negative observed prices up to the boundary \eqref{eq:admissible_case1}.

\subsection{Effective variance and zero crossing}\label{sec:where}

The variance coefficients of the reduced PDEs are
\[
 (\sigma_{\mathrm{eff}}^\pm)^2(t,\xi)
 =\left(\frac{\xi\pm\ep G^\pm}
              {1\pm\ep G^\pm_\xi}\right)^2\sigma_M^2(t).
\]
These are state-dependent variance coefficients, rather than percentage volatilities. Without feedback they reduce to $\xi^2\sigma_M^2(t)$. Figure~\ref{fig:where} shows their different behavior in the two cases.

In Case~1, when $\ep>0$, $t<T$ and $\sigma_M^2(t)>0$, we have $G^+(t,0)>0$ and $1+\ep G^+_\xi\geq1-\ep>0$. The coefficient is therefore positive at zero. Together with the diffusion calculation in Section~\ref{sec:bachelier}, this shows why zero need not prevent the observed price from crossing into negative territory. In Case~2, the admissible domain remains $\xi>0$ and the coefficient tends to zero as $\xi\downarrow0$.

\begin{figure}[htbp]
\centering
\includegraphics[width=0.9\linewidth]{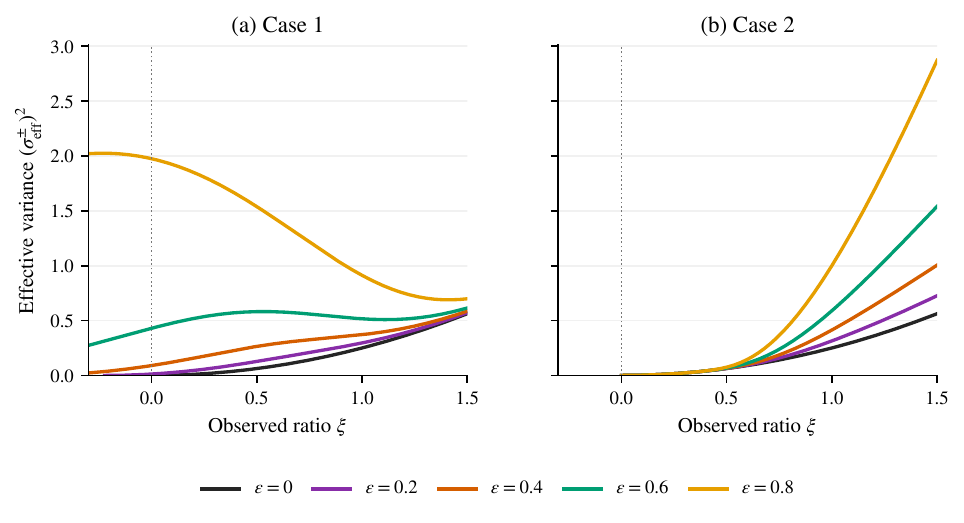}
\caption{Reduced-PDE variance coefficients in Case~1 (a) and Case~2 (b), with $D=0.90$, $\sigma_M=0.50$ and $T-t=1/3$ year. Both panels use the same scales and imbalance colors. Only admissible states are shown; vertical dotted lines indicate $\xi=0$.}
\label{fig:where}
\end{figure}

\subsection{Monte Carlo illustration}\label{sec:mc_negative}

We now illustrate crossings along paths starting from a positive observed price. Keeping the Case~1 diffusion coefficients $\beta$ and $\gamma$ derived in Section~\ref{sec:bachelier}, consider
\begin{equation}\label{eq:mc_observed_drift}
 dF_t^{\mathrm{obs}}
 =\mu F_t^{\mathrm{obs}}\,dt+\beta_t\,dW_t^1+\gamma_t\,dW_t^2,
 \qquad \mu\in\{-0.3,0,0.3\}.
\end{equation}
Time is measured in years. The second contract has zero drift and retains its lognormal dynamics, with $d\langle W^1,W^2\rangle_t=\rho\,dt$. The case $\mu=0$ is the derived risk-neutral model. The other two cases are drift-sensitivity scenarios; their drift coefficients are not estimated risk premia.
\begin{figure}[htb]
\centering
\includegraphics[width=0.9\linewidth]{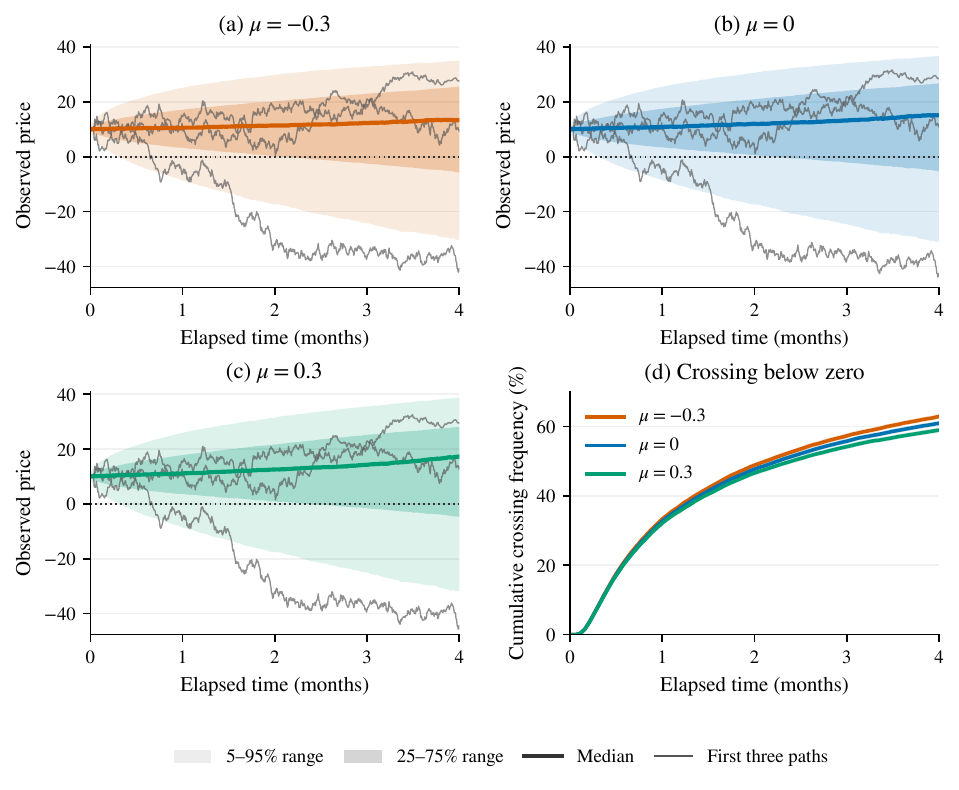}
\caption{Case~1 Monte Carlo illustration. Panels (a)--(c) show
the median, central 50\% and 90\% ranges, and the first
three sample paths for $\mu=-0.3,0,0.3$, respectively. Panel (d) shows cumulative crossing frequencies with pointwise
95\% intervals.
Inputs are $F_{0,T}^{\mathrm{obs}}=10$, $F_{0,T_2}=25$,
$\ep=0.80$, $D=1$, $T=1/3$ year and
$(\sigma_1,\sigma_2,\rho)=(0.50,0.30,0.30)$,
giving $\sigma_M=0.50$.}
\label{fig:mc_negative}
\end{figure}

For the inputs in Figure~\ref{fig:mc_negative}, the fractions of the paths detected below zero at least once are $62.9\%$, $61.0\%$ and $59.0\%$ for $\mu=-0.3$, $0$ and $0.3$, respectively. Their pointwise 95\% Monte Carlo intervals at the final time are approximately $[62.3,63.6]\%$, $[60.3,61.6]\%$ and $[58.3,59.7]\%$. These cumulative frequencies record whether a path has been detected below zero at any simulation time, rather than whether it is negative at maturity.

\section{Applications to the WTI Collapse and the Nickel Squeeze}\label{sec:calibrations}

We use our model pricing formulas in Section~\ref{sec:explicit} to calculate the implied imbalance given the observed futures price \(F^{\text{obs}}_{t,T}\) and a posited benchmark price \(F^{\text{lic}}_{t,T}\) which could plausibly have prevailed in the absence of exit pressure.

For a downward distortion (Case~1) or an upward distortion (Case~2), respectively, the implied imbalance satisfies
\begin{equation*}\label{eq:implied_eps}
 \ep_t^+V^+(t,F^{\text{obs}}_{t,T},F^{\text{obs}}_{t, T_2};\ep_t)=F^{\text{lic}}_{t,T}-F^{\text{obs}}_{t,T},
 \qquad
 \ep_t^-V^-(t,F^{\text{obs}}_{t,T},F^{\text{obs}}_{t, T_2};\ep_t)=F^{\text{obs}}_{t,T} - F^{\text{lic}}_{t,T}.
\end{equation*}
Using the scalar pricing identities, with $V^\pm$ evaluated at the observed price pair and the corresponding imbalance, we obtain
\begin{align*}
    (1-\ep_t^+)V^+ &= F^{\text{obs}}_{t,T_2} P\left(t,\frac{F^{\text{obs}}_{t,T} + \ep_t V^+}{F^{\text{obs}}_{t,T_2}}\right) = F^{\text{obs}}_{t,T_2} P\left(t,\frac{F^{\text{lic}}_{t,T}}{F^{\text{obs}}_{t,T_2}}\right),\\
    (1-\ep_t^-)V^- &= F^{\text{obs}}_{t,T_2}C\left(t,\frac{F^{\text{obs}}_{t,T} - \ep_t V^-}{F^{\text{obs}}_{t,T_2}}\right)= F^{\text{obs}}_{t,T_2}C\left(t,\frac{F^{\text{lic}}_{t,T}}{F^{\text{obs}}_{t,T_2}}\right).
\end{align*}
Solving for the imbalance gives
\begin{equation}\label{eq:nickel_scenario_inversion}
 \ep_t^+=
 \frac{F^{\text{lic}}_{t,T} - F^{\text{obs}}_{t,T}}
 {F^{\text{lic}}_{t,T} - F^{\text{obs}}_{t,T} + F^{\text{obs}}_{t,T_2} P\left(t,\frac{F^{\text{lic}}_{t,T}}{F^{\text{obs}}_{t,T_2}}\right)}, \qquad
 \ep_t^-=
 \frac{F^{\text{obs}}_{t,T} - F^{\text{lic}}_{t,T}}
 {F^{\text{obs}}_{t,T} - F^{\text{lic}}_{t,T} + F^{\text{obs}}_{t,T_2} C\left(t,\frac{F^{\text{lic}}_{t,T}}{F^{\text{obs}}_{t,T_2}}\right)}.
\end{equation}

For the lognormal formulas, we require $F^{\mathrm{lic}}_{t,T}>0$
and $F^{\mathrm{obs}}_{t,T_2}>0$. Case~1 applies when
$F^{\mathrm{obs}}_{t,T}\leq F^{\mathrm{lic}}_{t,T}$, and Case~2
when $F^{\mathrm{obs}}_{t,T}\geq F^{\mathrm{lic}}_{t,T}$.
With positive integrated variance, the corresponding option values
are strictly positive, so each applicable inversion gives a unique
$\ep_t^\pm\in[0,1)$. Each time point is a separate constant imbalance valuation; the sequence does not specify dynamics for $\ep_t$. 
We use $D=1$ in both baseline event illustrations.

\subsection{WTI: the April 2020 collapse}\label{sec:april2020_calibration}\label{sec:implied}

On April~20, 2020, the May and June WTI contracts recorded official daily settlement prices of $-37.63$ and $20.43$ USD/bbl, respectively. The May contract expired the next day \citep{CFTC2020WTI}. For the settlement inputs in Table~\ref{tab:wti_calibration}, the implied imbalance is approximately $\ep=0.87$.

\begin{table}[!htbp]
\centering
\begin{tabular}{lr}
\toprule
Input & WTI: Case~1\\
\midrule
Observed front price $F^{\text{obs}}_{t,T}$ & $-37.63$\\
Second-contract price $F^{\text{obs}}_{t,T_2}$ & $20.43$\\
Benchmark $F^{\text{lic}}_{t,T}$ & $13.14$\\
Benchmark ratio $\tfrac{F^{\text{lic}}_{t,T}}{F^{\text{obs}}_{t,T_2}}$ & $0.64$\\
Discount factor $D$ & $1.00$\\
Effective volatility $\sigma_M$ & $0.52$\\
Remaining time $T-t$ (years) & $2/252$\\
\midrule
Implied imbalance $\ep$ & $0.87$\\
\bottomrule
\end{tabular}
\caption{Conditional WTI calculation using the May and June official
settlements on April~20, 2020. Prices are in USD/bbl. The benchmark
is assumed, and volatility is estimated from pre-event returns.}
\label{tab:wti_calibration}
\end{table}

We next repeat the calculation intraday on April~20 using 268
timestamp matched five-minute closing prices for the May and June
contracts. These provide the observed pair $(F^{\mathrm{obs}}_{t,T},F^{\mathrm{obs}}_{t,T_2})$ at each sampled
time. The five-minute May and June WTI futures prices used in the following are obtained from Bloomberg.

To specify the benchmark throughout the day, we set
\[
F^{\mathrm{lic}}_{t,T}
=\lambda F^{\mathrm{obs}}_{t,T_2},
\qquad
\lambda
=
\frac{F^{\mathrm{obs}}_{t_0,T}}
     {F^{\mathrm{obs}}_{t_0,T_2}}
\approx 0.64,
\]
where $t_0$ is the first observation, midnight on April~20.
Thus the benchmark follows the June contract while preserving the initial May--June price ratio. This construction sets the initial implied imbalance to zero. It measures subsequent departures from the initial relation, without establishing that the market was free of exit pressure at $t_0$.
\begin{figure}[htbp]
\centering
\includegraphics[width=\linewidth]
{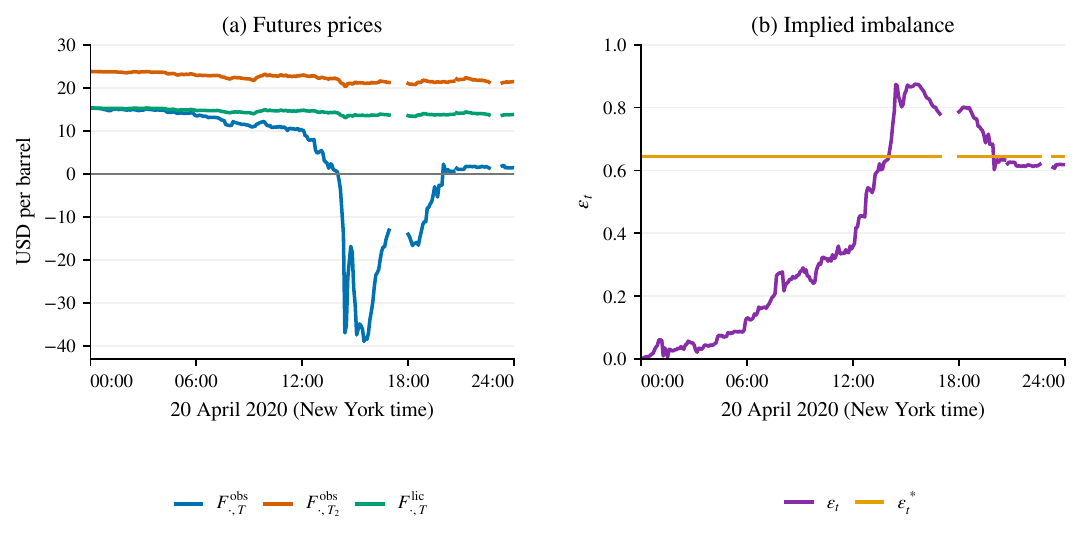}
\caption{WTI prices and conditional implied imbalance on
April~20, 2020. Panel~(a) shows the May contract
$F^{\mathrm{obs}}_{t,T}$, the June contract
$F^{\mathrm{obs}}_{t,T_2}$ and the assumed benchmark
$F^{\mathrm{lic}}_{t,T}=0.64\times F^{\mathrm{obs}}_{t,T_2}$.
Panel~(b) shows the implied imbalance $\ep_t^+$ in purple
and the zero-price threshold $\ep_t^*$ in orange.}
\label{fig:implied}
\end{figure}


The volatility input is estimated from the difference between the May and June log returns over March~18--April~17. We retain 5,974 returns between consecutive five-minute observations, excluding returns across missing bars or session breaks. Annualizing with $252\times276$ five-minute intervals gives $\sigma_M=0.5203$. This is a historical relative-price volatility used as a pricing input, rather than an option implied volatility. The remaining horizon decreases to April~21 at 14:30 and is measured in the same trading time units, excluding the daily maintenance interval.

At each timestamp, we first construct $F^{\mathrm{lic}}_{t,T}$ from the observed June price. We then evaluate the Margrabe put at $F^{\mathrm{lic}}_{t,T}/F^{\mathrm{obs}}_{t,T_2}=\lambda$, using the remaining horizon, and substitute its value and the observed price gap into~\eqref{eq:nickel_scenario_inversion}. This gives $\ep_t^+$ directly. The prices in panel~(a) of Figure~\ref{fig:implied} therefore determine the implied-imbalance series in panel~(b), conditional on the benchmark and pricing inputs.

The May price falls sharply while the June price, and hence the assumed benchmark, remain positive. The resulting increase in the price gap raises the implied imbalance, which reaches $0.87$ at 15:30, when the sampled May price is $-\$38.96$/bbl. As the May price recovers, the inferred imbalance declines to approximately $0.6182$ at the final observation. The peak estimate differs from the settlement estimate because the two calculations use different observed price pairs.

\subsection{Nickel: the March 2022 squeeze}\label{sec:nickel_calibration}\label{sec:nickel_data}
The three-month nickel price reached $\$101{,}365$/tonne at 06:08 on March~8, 2022, before retreating; trading was suspended at 08:15 \citep{OliverWyman2023}. The LME subsequently canceled trades executed from midnight that day, so the high is a reported pre-cancellation trade price, not a settlement \citep{LME2022marketupdate}.

As in the WTI case, we specify a benchmark price and calculate the imbalance implied by \eqref{eq:nickel_scenario_inversion}. 
See Table \ref{tab:nickel_calibration}.
\begin{table}[htb]
\centering
\begin{tabular}{lr}
\toprule
Input & Nickel: Case~2\\
\midrule
Reported three-month high $F^{\mathrm{obs}}_{t,T}$
    & $101{,}365$\\
Fifteen-month price $F^{\mathrm{obs}}_{t,T_2}$
    & $60{,}195$\\
Benchmark $F^{\mathrm{lic}}_{t,T}$
    & $62{,}863.41$\\
Benchmark ratio $\tfrac{F^{\text{lic}}_{t,T}}{F^{\text{obs}}_{t,T_2}}$ & $1.04$\\
Discount factor $D$ & $1.00$\\
Effective volatility $\sigma_M$ & $0.5$\\
Remaining time $T-t$ (years) & $1/4$\\
\midrule
Implied imbalance $\ep$ & $0.9346$\\
\bottomrule
\end{tabular}
\caption{Conditional nickel calculation at the reported
three-month high. Prices are in USD/tonne. The fifteen-month
price and benchmark are scenario values; the ratio and effective
volatility are approximate inputs extracted from the historical
chart.}
\label{tab:nickel_calibration}
\end{table}
Here, however, our dataset does not contain synchronized intraday three-month and fifteen-month quotes. We therefore retain a constructed path through the reported three-month levels of approximately $\$50{,}000$ at 01:00, $\$60{,}000$ at 04:49, $\$70{,}000$ at 05:44 and $\$101{,}365$ at 06:08 \citep[p.~21]{OliverWyman2023}. Times are London time (GMT). The fifteenth-month path is illustrative; it is not an observed contract series. Intraday data is not available through Bloomberg for the period March 8 -- March 15, 2022.

To specify the benchmark, we use the three-month and fifteen-month curves in Exhibit~5 of \citet[p.~15]{OliverWyman2023}. The final plotted values give an approximate ratio
\[
    \lambda = \frac{F_{3M}}{F_{15M}} \approx 1.04.
\]
We retain this ratio throughout the illustration and set $F^{\mathrm{lic}}_{t,T}=\lambda F^{\mathrm{obs}}_{t,T_2}$.

At the reported high, the observed price exceeds the benchmark
by approximately $38{,}501.59$. Equation~\eqref{eq:nickel_scenario_inversion} therefore gives
$\ep\approx0.9346$. This value depends on the assumed
fifteen-month path and benchmark; it is not an independent
estimate of historical position imbalance.
\begin{figure}[htb]
\centering
\includegraphics[width=\linewidth]
{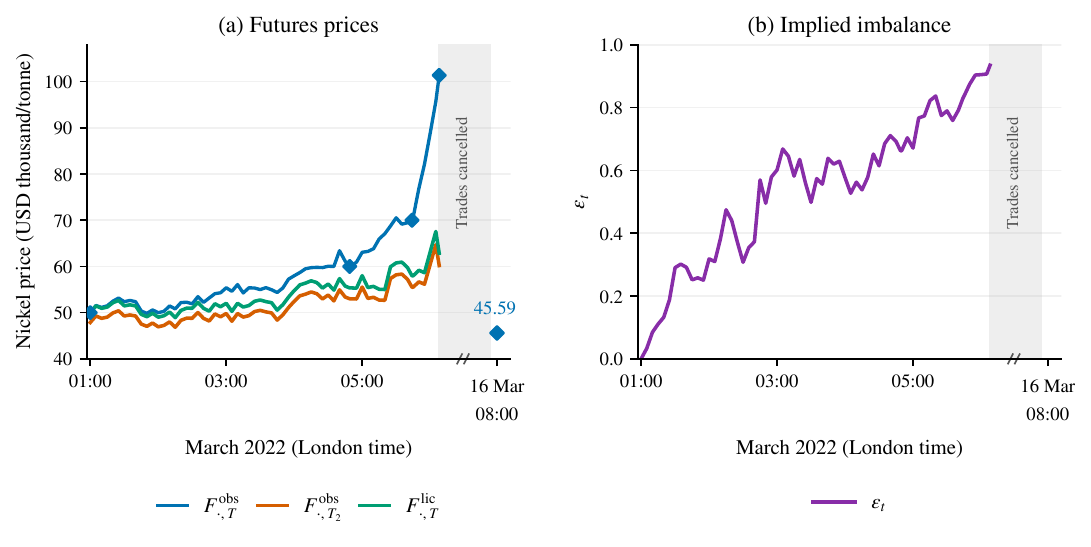}
\caption{Nickel price scenario and conditional Case~2 imbalance.
Panel (a) shows the constructed three-month path
$F^{\mathrm{obs}}_{\cdot,T}$, the illustrative fifteen-month path
$F^{\mathrm{obs}}_{\cdot,T_2}$ and the posited benchmark
$F^{\mathrm{lic}}_{\cdot,T}$. Diamonds on March~8 mark reported
price anchors. Panel (b) shows the resulting implied imbalance. 
The displayed March~8 trades were subsequently canceled.
The isolated blue diamond is the three-month opening price of
$\$45{,}590$/tonne at 08:00 on March~16
\citep{OliverWyman2023}.}
\label{fig:nickel}
\end{figure}

\section{Conclusion}\label{sec:conclusion}
This paper develops an explanatory model of how delivery constraints can amplify commodity futures price movements. Separating participants by their ability to participate in physical delivery links the direction of the price distortion to the imbalance of constrained positions. Excess constrained longs generate downward pressure, while excess constrained shorts generate upward pressure. A roll option connects this pressure to the observed price, producing a nonlinear feedback relation.

The main analytical contribution is an exact representation of the resulting pricing solution. Under lognormal benchmark dynamics, dimension reduction and a change of variables connect the nonlinear PDE to the Margrabe exchange option formula. At a given observed price state, valuation requires one scalar root. The normal dynamics extension provides an analogous Bachelier representation. These results make the feedback mechanism accessible without numerically solving the PDE and allow its effects on option values, admissible prices and effective variance to be studied directly. In particular, the lognormal benchmark can remain positive while the observed price crosses zero.

The WTI and nickel illustrations show how the mechanism can generate large distortions in either direction. Their role is to relate the model to recognizable market events and to quantify the imbalance required under stated assumptions. They do not establish that feedback alone caused either event, or that the implied imbalance equals the actual historical position imbalance. Benchmark prices, the later-maturity futures price and the relevant horizon all matter. The model complements existing explanations based on fundamentals, liquidity and market structure by isolating the consequences of constrained position closure.

Several extensions follow naturally. The first is a nonconstant imbalance. A deterministic time-dependent $\varepsilon(t)$ could represent a changing concentration of positions as a closure date approaches. A stochastic $\varepsilon_t$ could respond to prices, margin calls, liquidity and changes in participants' positions. Such extensions require a new dynamic pricing derivation: differentiating the feedback relation introduces terms involving changes in $\varepsilon_t$ and, in the stochastic case, covariation terms. Substituting a time-varying estimate into the constant-$\varepsilon$ formula is therefore not, by itself, a solution of the extended model. It would be useful to determine which reductions remain available and how stochastic imbalance changes the probability and persistence of extreme prices.

A second direction is empirical. Applying the framework to a broader set of historical squeezes, liquidation episodes and less stressed contract expiries would help assess when this mechanism is informative and when it is insufficient. This would require synchronized prices across maturities, measures of liquidity and delivery conditions, and sensitivity analysis for the fundamental benchmark. Comparisons across events could then examine whether the model's implications extend beyond the two motivating examples.

The most important data requirement is contract-specific information on positions and delivery capability. Aggregate open interest or broad commercial and noncommercial classifications do not identify the licensed and unlicensed groups used here. Positions would need to be linked, where possible, to access to eligible inventory, storage and delivery arrangements. Such data would allow the imbalance to be measured independently of the price inversion, turning the present explanatory illustrations into tests of the proposed feedback mechanism.

\appendix
\section{Contract specifications and delivery constraints}\label{app:contracts}

This appendix summarizes the contract features and delivery constraints relevant to the two applications. Throughout the paper, ``licensed'' refers to the ability to use the relevant delivery arrangements. It is a modelling distinction, rather than a regulatory classification, and
does not necessarily coincide with the distinction between commercial and non-commercial traders. The tables summarize selected features of the cited contract specifications.

\subsection{Case 1: NYMEX WTI crude oil}\label{appendix:contract_details}
NYMEX WTI futures are listed by delivery month and provide for physical
delivery of eligible crude at Cushing, Oklahoma. A trader holding a long position into the delivery process must be able to arrange receipt of the oil. Table~\ref{tab:wti_specs} summarizes the relevant contract features.
\begin{table}[H]
\centering\small
\renewcommand{\arraystretch}{1.2}
\begin{tabular}{@{}>{\raggedright\arraybackslash}p{3.8cm}>{\raggedright\arraybackslash}p{10cm}@{}}
\toprule
\textbf{Feature} & \textbf{NYMEX WTI futures}\\
\midrule
Contract and code & Light Sweet Crude Oil futures; CL.\\
Contract unit & 1,000 US barrels.\\
Price quotation & US dollars per barrel.\\
Delivery calendar & Monthly delivery contracts.\\
End of trading & The third business day before the 25th calendar day of the preceding month; if the 25th is not a business day, the third business day before the last business day preceding it, subject to exchange calendar rules.\\
Settlement & Physical delivery during the contract's delivery month.\\
Eligible material & Light sweet crude meeting the exchange's quality and delivery requirements.\\
Delivery arrangements & Eligible pipeline or storage facilities at Cushing, Oklahoma.\\
Model-relevant constraint & A long without access to the necessary receipt, storage and operational arrangements may need to close before delivery.\\
\bottomrule
\end{tabular}
\caption{Selected WTI contract features relevant to Case~1. Source: NYMEX Rulebook, Chapter~200 \citep{CMERulebook200}.}
\label{tab:wti_specs}
\end{table}

\subsection{Case 2: LME nickel}\label{app:lme_nickel_specs}

LME nickel futures trade for specified delivery dates, known as prompt dates, including the widely quoted 
three-month tenor. Each position has a fixed prompt date; maintaining a constant three-month exposure requires rolling the position. Table~\ref{tab:lme_nickel_specs} summarizes the relevant contract features.
\begin{table}[H]
\centering\small
\renewcommand{\arraystretch}{1.2}
\begin{tabular}{@{}>{\raggedright\arraybackslash}p{3.8cm}>{\raggedright\arraybackslash}p{10cm}@{}}
\toprule
\textbf{Feature} & \textbf{LME nickel futures}\\
\midrule
Contract and code & Primary nickel futures; NI.\\
Contract unit & 6 metric tonnes.\\
Price quotation & US dollars per metric tonne.\\
Delivery calendar & Daily prompt dates out to three months, weekly dates from three to six months, and monthly dates thereafter out to 63 months.\\
End of trading & Up to the close of the first Ring on the business day before the prompt date, under the exchange's prompt-date rules.\\
Settlement & Physical delivery on the specified prompt date.\\
Eligible material & Approved nickel meeting the minimum 99.80\% purity requirement and the exchange's brand and form specifications.\\
Delivery arrangements & Transfer of warrants for eligible metal in LME-approved warehouses.\\
Model-relevant constraint & A short without access to eligible metal may need to buy it or close the futures position; ownership of other nickel products does not itself meet the delivery requirement.\\
\bottomrule
\end{tabular}
\caption{Selected LME nickel contract features relevant to Case~2.
Sources: LME contract specifications \citep{LMENickelSpecs} and
prompt-date guidance \citep{LMEPromptDates}.}
\label{tab:lme_nickel_specs}
\end{table}

\setlength{\emergencystretch}{2em}
\renewcommand{\bibfont}{\small}
\setlength{\bibsep}{6pt plus 1pt}
\bibliographystyle{abbrvnat}
\bibliography{references}

\begin{thebibliography}{23}
\providecommand{\natexlab}[1]{#1}
\providecommand{\url}[1]{\texttt{#1}}
\expandafter\ifx\csname urlstyle\endcsname\relax
  \providecommand{\doi}[1]{doi: #1}\else
  \providecommand{\doi}{doi: \begingroup \urlstyle{rm}\Url}\fi

\bibitem[Basak and Pavlova(2016)]{pavlova2016model}
S.~Basak and A.~Pavlova.
\newblock A model of financialization of commodities.
\newblock \emph{The Journal of Finance}, 71\penalty0 (4):\penalty0 1511--1556,
  2016.
\newblock \doi{10.1111/jofi.12408}.

\bibitem[Black(1976)]{black1976futures}
F.~Black.
\newblock The pricing of commodity contracts.
\newblock \emph{Journal of Financial Economics}, 3\penalty0 (1--2):\penalty0
  167--179, 1976.
\newblock \doi{10.1016/0304-405X(76)90024-6}.

\bibitem[Bouchouev(2020)]{Bouchouev2020negative}
I.~Bouchouev.
\newblock Negative oil prices put spotlight on investors.
\newblock Energy Risk, Apr. 2020.
\newblock URL
  \url{https://www.energyrisk.com/oil-refined-products/7949778/negative-oil-prices-put-spotlight-on-investors}.
\newblock Published 30 April 2020.

\bibitem[Bouchouev(2023)]{Bouchouev2023virtualbarrels}
I.~Bouchouev.
\newblock \emph{Virtual Barrels: Quantitative Trading in the Oil Market}.
\newblock Springer Texts in Business and Economics. Springer Nature, Cham,
  Switzerland, 2023.
\newblock ISBN 978-3-031-36150-0.
\newblock \doi{10.1007/978-3-031-36151-7}.

\bibitem[Chan et~al.(2015)Chan, Sircar, and Stein]{Sircar2015feedback}
P.~Chan, R.~Sircar, and M.~V. Stein.
\newblock A feedback model for the financialization of commodity markets.
\newblock \emph{SIAM Journal on Financial Mathematics}, 6\penalty0
  (1):\penalty0 870--899, 2015.
\newblock \doi{10.1137/140995349}.

\bibitem[Cheng and Xiong(2014)]{Cheng2014}
I.-H. Cheng and W.~Xiong.
\newblock Financialization of commodity markets.
\newblock \emph{Annual Review of Financial Economics}, 6:\penalty0 419--441,
  2014.
\newblock \doi{10.1146/annurev-financial-110613-034432}.

\bibitem[{CME Clearing}(2020)]{CME2020Bachelier}
{CME Clearing}.
\newblock Switch to {Bachelier} options pricing model: Effective {April} 22,
  2020, Apr. 2020.
\newblock URL
  \url{https://www.cmegroup.com/notices/clearing/2020/04/Chadv20-171.html}.
\newblock Clearing Advisory 20-171, issued 21 April 2020.

\bibitem[{CME Group}(n.d.)]{CMERulebook200}
{CME Group}.
\newblock \emph{{NYMEX} Rulebook, Chapter 200: Light Sweet Crude Oil Futures}.
\newblock CME Group, n.d.
\newblock URL \url{https://www.cmegroup.com/rulebook/NYMEX/2/200.pdf}.
\newblock Accessed 30 September 2026.

\bibitem[{Commodity Futures Trading Commission}(2020)]{CFTC2020WTI}
{Commodity Futures Trading Commission}.
\newblock Interim staff report on trading in {NYMEX WTI} crude oil futures
  contract leading up to, on, and around {April} 20, 2020.
\newblock Technical report, Commodity Futures Trading Commission, Nov. 2020.
\newblock URL
  \url{https://www.cftc.gov/media/5296/InterimStaffReportNYMEX_WTICrudeOil/download}.

\bibitem[{Financial Conduct Authority}(2023)]{FCA2023LME}
{Financial Conduct Authority}.
\newblock Update on our public statement on the {London Metal Exchange}, Mar.
  2023.
\newblock URL
  \url{https://www.fca.org.uk/news/statements/update-our-public-statement-london-metal-exchange}.
\newblock Published 3 March 2023.

\bibitem[Frey and Stremme(1997)]{FreyStremme1997}
R.~Frey and A.~Stremme.
\newblock Market volatility and feedback effects from dynamic hedging.
\newblock \emph{Mathematical Finance}, 7\penalty0 (4):\penalty0 351--374, 1997.
\newblock \doi{10.1111/1467-9965.00036}.

\bibitem[{High Court of Justice}(2023)]{Elliott2023}
{High Court of Justice}.
\newblock {R (Elliott Associates L.P. and others) v London Metal Exchange and
  LME Clear Limited}.
\newblock [2023] EWHC 2969 (Admin), 2023.
\newblock URL
  \url{https://www.lme.com/-/media/files/news/co19952022-ac2022lon000890-co20072022-ac2022lon000901-final-judgment-291123.pdf}.
\newblock King's Bench Division, Administrative Court. Judgment dated 29
  November 2023.

\bibitem[{London Metal Exchange}(2022)]{LME2022marketupdate}
{London Metal Exchange}.
\newblock Nickel market update, 2022.
\newblock URL
  \url{https://www.lme.com/-/media/Files/News/Notices/2022/03/TRADING-22-057-NICKEL-MARKET-UPDATE_.pdf}.
\newblock Notice 22/057, 10 March 2022.

\bibitem[{London Metal Exchange}(n.d.{\natexlab{a}})]{LMENickelSpecs}
{London Metal Exchange}.
\newblock {LME} nickel: Contract specifications, n.d.{\natexlab{a}}.
\newblock URL
  \url{https://www.lme.com/en/Metals/Non-ferrous/LME-Nickel/Contract-specifications}.
\newblock Accessed 28 September 2026; cited for selected contract features, not
  as an archived 2022 rulebook.

\bibitem[{London Metal Exchange}(n.d.{\natexlab{b}})]{LMEPromptDates}
{London Metal Exchange}.
\newblock Prompt date structure, n.d.{\natexlab{b}}.
\newblock URL
  \url{https://www.lme.com/Sustainability-and-Physical-Markets/Physical-market-benefits/Prompt-date-structure}.
\newblock Accessed 28 September 2026.

\bibitem[Ma and Wang(2021)]{Ma2021strategic}
C.~Ma and X.~Wang.
\newblock Strategic interactions and negative oil prices.
\newblock \emph{Annals of Financial Economics}, 16\penalty0 (3):\penalty0
  2150013, 2021.
\newblock \doi{10.1142/S2010495221500135}.

\bibitem[Margrabe(1978)]{Margrabe1978}
W.~Margrabe.
\newblock The value of an option to exchange one asset for another.
\newblock \emph{The Journal of Finance}, 33\penalty0 (1):\penalty0 177--186,
  1978.
\newblock \doi{10.1111/j.1540-6261.1978.tb03397.x}.

\bibitem[{Oliver Wyman}(2023)]{OliverWyman2023}
{Oliver Wyman}.
\newblock Independent review of events in the nickel market in {March} 2022:
  Final report.
\newblock Independent review, Oliver Wyman and {NERA}, commissioned by {LME
  Group}, Jan. 2023.
\newblock URL
  \url{https://www.lme.com/-/media/Files/Trading/New-initiatives/Nickel-independent-review/Independent-Review-of-Events-in-the-Nickel-Market-in-March-2022---Final-Report.pdf}.

\bibitem[Schwartz(1997)]{Schwartz1997}
E.~S. Schwartz.
\newblock The stochastic behavior of commodity prices: Implications for
  valuation and hedging.
\newblock \emph{The Journal of Finance}, 52\penalty0 (3):\penalty0 923--973,
  1997.
\newblock \doi{10.1111/j.1540-6261.1997.tb02721.x}.

\bibitem[Sircar and Papanicolaou(1998)]{SircarPapanicolaou1998}
K.~R. Sircar and G.~Papanicolaou.
\newblock General {Black--Scholes} models accounting for increased market
  volatility from hedging strategies.
\newblock \emph{Applied Mathematical Finance}, 5\penalty0 (1):\penalty0 45--82,
  1998.
\newblock \doi{10.1080/135048698334727}.

\bibitem[Sockin and Xiong(2015)]{SockinXiong2015}
M.~Sockin and W.~Xiong.
\newblock Informational frictions and commodity markets.
\newblock \emph{The Journal of Finance}, 70\penalty0 (5):\penalty0 2063--2098,
  2015.
\newblock \doi{10.1111/jofi.12261}.

\bibitem[Tang and Xiong(2012)]{Tang2012index}
K.~Tang and W.~Xiong.
\newblock Index investment and the financialization of commodities.
\newblock \emph{Financial Analysts Journal}, 68\penalty0 (6):\penalty0 54--74,
  2012.
\newblock \doi{10.2469/faj.v68.n6.5}.

\bibitem[{U.S. Energy Information Administration}(2020)]{EIAcushing}
{U.S. Energy Information Administration}.
\newblock Cushing, {OK} ending stocks excluding {SPR} of crude oil.
\newblock Weekly data series \texttt{W\_EPC0\_SAX\_YCUOK\_MBBL}, 2020.
\newblock URL
  \url{https://www.eia.gov/dnav/pet/hist/LeafHandler.ashx?n=PET&s=W_EPC0_SAX_YCUOK_MBBL&f=W}.
\newblock 2020 observations; accessed 30 September 2026.

\end{thebibliography}

\end{document}